\newcommand{\be}{\begin{equation}}
\newcommand{\ee}{\end{equation}}
\newtheorem{Theorem}{Theorem}
\newtheorem{proposition}{Proposition}
\newtheorem{corollario}{Corollary}
\newtheorem{Definition}{Definition}
\newtheorem{definition}{Definition}
\newtheorem{remark}{Remark}
\newenvironment{proof}{\textbf{Proof~}}{\hfill$\square$}
\begin{document}



\title{Free energies of Boltzmann Machines: self-averaging, \\ annealed and replica symmetric approximations in the thermodynamic limit}

\author{Elena Agliari}
\affiliation{Dipartimento di Matematica, Sapienza Universit\`a di Roma, Italy\\
GNFM-INdAM Sezione di Roma, Italy}
\author{Adriano Barra}
\affiliation{Dipartimento di Matematica e Fisica, Universit\`a del Salento, Lecce, Italy\\
GNFM-INdAM Sezione di Roma, Italy\\
INFN Sezione di Lecce, Italy}
\author{Brunello Tirozzi}
\affiliation{Dipartimento di Fisica, Sapienza Universit\`a di Roma, Italy\\
Enea Research Center, Frascati, Italy}


\begin{abstract}
Restricted Boltzmann machines (RBMs) constitute one of the main models for machine statistical inference and they are widely employed in Artificial Intelligence as powerful tools for (deep) learning. However, in contrast with countless remarkable practical successes, their mathematical formalization has been largely elusive: from a statistical-mechanics perspective these systems display the same (random) Gibbs measure of bi-partite spin-glasses, whose rigorous treatment is notoriously difficult.
\newline
In this work, beyond providing a brief review on RBMs from both the learning and the retrieval perspectives, we aim to contribute to their analytical investigation, by considering  two distinct realizations of  their weights (i.e., Boolean and Gaussian) and studying the properties of their related free energies. More precisely, focusing on a RBM characterized by digital couplings, we first extend the Pastur-Shcherbina-Tirozzi method (originally developed for the Hopfield model) to prove the self-averaging property for the free energy, over its quenched expectation, in the infinite volume limit, then we explicitly calculate its simplest approximation, namely its annealed bound.
\newline
Next, focusing on a RBM characterized by analogical weights, we extend Guerra's interpolating scheme to obtain a control of the quenched free-energy under the assumption of replica symmetry (i.e., we require that the order parameters do not fluctuate in the thermodynamic limit): we get self-consistencies for the order parameters (in full agreement with the existing Literature) as well as the critical line for ergodicity breaking that turns out to be the same obtained in AGS theory.
As we discuss, this analogy stems from the slow-noise universality.
\newline
Finally, glancing beyond replica symmetry, we analyze the fluctuations of the overlaps for a correct estimation of the (slow) noise affecting the retrieval of the signal, and by a stability analysis we recover the Aizenman-Contucci identities typical of glassy systems.
\end{abstract}
	
\maketitle
	
	
\section{Introduction: Boltzmann machines in a nutshell.}
Boltzmann machines (BMs) play a key role in Artificial Intelligence: being able to learn internal representations (when fed by external data) and to solve
difficult combinatoric problems (when suitably trained), they can be efficiently employed in machine learning and machine statistical-inference. Also, extensions of the BMs, i.e., the so-called Deep Boltzmann machines \cite{Hinton1,cinese}, allow for Deep Learning, that is the novel generation of Artificial Intelligence. Their applications are broadly ranged in Science (from Particle Physics \cite{PP} to Computational Biology \cite{CB}), not to mention
the applied world of technology, where their usage has become pervasive \cite{DL1,Martens}.
Further, several models of memory formation and pattern recognition in Theoretical Immunology (regarding the adaptive branch of the immune system of
mammals) \cite{Agliari4,Agliari5,Annibale1,Annibale2,Mozeika1,Mozeika2} are naturally framed in the mathematical scaffold of BMs.

In a nutshell, a BM is a network of symmetrically connected units (also called neurons or spins) divided into two or more layers. Here we shall focus
on \emph{restricted} Boltzmann machines (RBMs), made of two layers called ``visible'' and ``hidden'', respectively; the units are connected to each
other across layers, but there is no intra-layer communication (this is the {\em restriction} in a RBM, see Fig.~\ref{fig:uno}, left panel).
\newline
As anticipated, RBMs can be used to solve quite different computational problems \cite{BM-ML2,Huang1,Huang2,BM-ML1,Decelle1,Florent3}.
In order to achieve this, the machine has first to undergo a training process where its parameters, namely the thresholds $\theta$ for neuron firing (i.e., for spin flip in the binary case \cite{Amit}) and the weights $\xi$ associated to links (i.e., the synaptic values), are stochastically tuned according to proper algorithms (e.g., contrastive divergence \cite{Hinton1} and simulated annealing \cite{kirkpatrick}). {After this process, one can initialise the visible layer in a given state (input) and make the system evolve towards its ground state}\footnote{The symmetry of the interactions between neurons ensures the detailed balance which, in turn, ensures the relaxation to a Gibbs measure.}; the latter (output) provides the solution of the problem \cite{tank}.
\newline
{Let us explain this concept more extensively. During the training stage, a set of noisy data (assumed to be {independently and identically generated by a given probability distribution}) is shown to the visible layer of the RBM. This is obtained by encoding the data in terms of state configurations for the visible layer (e.g., data can be constituted by pictures, where each pixel, either black or white, is associated to the state, either $+1$ or $-1$, of the related binary visible unit). For each item presented to the visible layer, the system  relaxes to the pertinent equilibrium distribution and, once a stable thermalization is reached, an update in the machine parameters $(\xi, \theta)$ plays as a {\em learning rule} in such a way that the Kullback-Leibler divergence between the internal {probabilistic} representation of the data (given by the equilibrium distribution of the system) and the external distribution generating these data is minimized.
In this process the hidden units, whose states are not specified by the data provided, act as latent variables and allow the RBM to store information about the learnt data. Each hidden unit turns out to be associated to a ``feature'' (e.g., the extent of correlations among the entries of input data) and the set of all the features is supposed to statistically characterise the data provided \cite{Bengio1}. Actually, the nature of such features emerges during the training without control from the user.
After training, if a new item (from the same distribution and possibly noisy) is shown, the machine will be able to ``recognize'' its statistical structure.
For instance, if trained on pictures, the machine will theoretically model the distribution of pictures and can then be used for reconstruction tasks: given a partial picture as input, the machine will use the internal model to provide, as output, the complete picture.
We refer to Sec.~\ref{sec:overview} for a more technical explanation.}

Despite the success in practical applications, a rigorous control of RBMs remains a hard problem from the mathematical perspective. {It is worth recalling that, in the mathematical investigations we are interested in, weights on links are assumed as quenched and extracted according to a given distribution which is meant as the result of some training process. Thus, the RBMs we are dealing with can} be
classified as two-party spin-glasses \cite{Barra1,Barra4,Guerra0,MPV,Panchenko}. As a consequence, an extensive use of techniques and tools from the Statistical Mechanics of
Disordered Systems is in order and, quoting Talagrand, this is still a {\em challenge for mathematicians} \cite{Tala0}.
Remarkably, reaching a full, rigorous description for these models is further hindered by the fact that, due to the wide range of their applications,
there exists a number of variations on theme \cite{DL1} and their differences are subtle but important. These differences mainly affect the nature of links
(e.g., Gaussian or Boolean) and the nature of the units (e.g., Gaussian or Boolean) of the hidden layer \cite{Barra0}.

As a matter of fact, only a few rigorous results on this subject are available (see e.g., \cite{Benarous,Martens}) and there is a urgent and significant gap to fill, as broadly recognized (see e.g., \cite{DL1,Benarous}).
A possible way to pave towards rigorous advances has recently been investigated \cite{Agliari1,Barra3,Florent1,Florent2,Mezard,Monasson,Dani} and it is based on the thermodynamic equivalence of RBMs and Hopfield neural networks \cite{Barra2}. The latter constitute the standard model for associative memory \cite{Hopfield},
for which several mathematical tools have been made available along the past two decades (see e.g., \cite{Barra5,Bovier2,Bovier4,Tala1,Tala2,Tirozzi1,Tirozzi2}).
In the large volume limit, this bridge allows extending several rigorous mathematical approaches, originally developed within the framework of
spin-glasses and associative neural networks, to machine learning as well. In particular, in this paper, we will focus on two variations on this theme and address their
properties with two different techniques as summarized in the following:
\begin{itemize}
\item {\em RBMs with Boolean links}.
\newline
This is a bi-partite spin-glass where the two parties (i.e., the two layers) contain variables of different nature: the visible layer is made of binary Ising spins, while the hidden layer is made of real Gaussian spins. The weights on links connecting the units belonging to different layers are binary {($\pm 1$)} and extracted independently and identically with equal probability. For this model we adapt the Pastur-Shcherbina-Tirozzi scheme, originally developed for the standard Hopfield model \cite{Tirozzi2,Tirozzi3}. This procedure allows us to prove the self-averaging of the main observable, namely the free-energy of the model, over its quenched expectation.  Next, we provide its first approximation, namely its annealed bound, which will be achieved via Jensen inequality.
\newline
The underlying idea in this route is to construct a {\em matryoshka} of $\sigma$-algebras  generated by the random weights associated to links and to define conditional expectations over them. Next, we show that the sequence of conditional expectations of the extensive free-energy
constitutes a martingale and, exploiting its properties (mainly Doob Theorem \cite{martingale}), we can bound fluctuations in the intensive free-energy and prove that it converges to its quenched expectation.

\item {\em RBMs with Gaussian links}.
\newline
This is a bi-partite spin-glass where the two parties (i.e., the two layers) contain variables of different nature: the visible layer is made of binary Ising spins, while the hidden layer is made of real Gaussian spins. The weights on links connecting the units belonging to different layers are real valued and sampled identically and independently from a Gaussian distribution $\mathcal{N}[0,1]$.
\newline
In order to get an explicit expression of the infinite volume limit of the quenched free-energy for these machines we adapt an interpolation scheme, originally developed by Guerra for the Sherrington-Kirkpatrick spin-glass \cite{Mingione}.
Here the underlying idea is to compare the original network, where the two parties interact extensively, with an effective model where the two parties are no longer interacting, rather, they experience random external fields, whose statistical distributions mirror the real ones, namely each neuron feels an external field which mimics the internal field generated by neurons it is connected with. This makes the calculations feasible because the effective model is one-body thus, in principle, integrable. As standard practice in neural networks \cite{Amit,Coolen}, we assume that, in the infinite volume limit, the order parameters do not fluctuate (their distributions get $\delta$-peaked over their thermodynamic values), namely we will assume replica symmetry.
\newline
Since the slow noise generated by the not-retrieved patterns is \emph{universal} (namely, in the thermodynamic limit, the replica-symmetric expression for the noises stemming from pattern entries sampled from a Bernoullian distribution and from a Gaussian one coincide), the results of this section naturally extend to the machine with digital links investigated in the previous one. This remark prompts to a stability analysis, where we investigate overlap fluctuations (hence glancing beyond replica symmetry) and we find, as standard in glassy systems, that these fluctuations obey the (suitably adapted) Aizenman-Contucci constraints.
\end{itemize}

{\em Summary of the paper:} In the next section we present our perspective on the way RBMs actually work, while we leave the next Secs.~\ref{sec:BT}-\ref{sec:FG} to novel and rigorous results, Sec.~\ref{sec:signal} for deepening these results from an applied perspective and, finally, the last section contains our conclusions. Particularly long proofs (those of Theorem 1, Proposition 2, Theorem 3 and Corollary 1) are left in the Appendices to lighten the exposition, while all the other proofs are brief enough to be included in the main text.
\newline
Just for the next section we will relax the mathematical rigour in order to allow for a minimal but fluent presentation and provide a streamlined review.

\section{A quick overview on restricted Boltzmann machines} \label{sec:overview}

As anticipated in the previous section, a RBM is a two-party system, where one party (or {\em layer} to keep the original jargon \cite{Hinton1,DLbook}), referred to as visible, receives {input data} from the outside world, while the other party (or layer), referred to as hidden, is dedicated to figure out correlations in {these input data}.
Typically, a set of {$M>0$} data vectors {$\{ \sigma_1, \sigma_2, ..., \sigma_M \}$} (i.e., the so-called {\em training set} \cite{DLbook}) is presented to the machine and, under the assumption that these data have been generated by the
same probability distribution $Q(\sigma)$, the ultimate goal of the machine is to make an inner representation of $Q(\sigma)$, say $P(\sigma {|\xi, \theta})$, that is as close as possible to the
original one, i.e., it has to reconstruct the signal. Clearly, in order to get a good representation, the more complicated $Q(\sigma)$, the larger the training set\footnote{Actually, in order to optimize the training stage, one should also properly set the internal parameters of the machine such as the ratio between the sizes of the visible and hidden layer, the kind of the neurons, etc. \cite{Barra0,DLbook}.}.
\newline
Each layer is composed by spins (also called neurons in this context), $N$ for the visible layer and $P$ for the hidden layer, and these spins can be chosen with high generality, ranging from discrete-valued
(e.g., Ising spins), to real-valued (e.g., Gaussian spins).
The thermodynamic limit of the ratio between the layer sizes, denoted as $\alpha = \lim_{N\to \infty} P/N$, is a control parameter and usually one splits the case $\alpha = 0$ (possibly yielding to under-fitting) and the case $\alpha \in \mathbb{R}^+$ (possibly yielding to over-fitting) \cite{DLbook}, the latter being mathematically much more challenging.
\newline
Analogously, the entries of the weight matrix can be either real or discrete. Generally speaking, continuous weights allows {for
learning rules (e.g., the contrastive divergence involving weight derivatives) which are more powerful than} their discrete counterparts (the typical learning rule for binary weights is the Hebbian one \cite{Amit})
and are therefore more convenient during the learning stage; on the other hand,
binary weights are more performing in the so-called retrieval phase, that is, once the machine has learnt and is ready to {perform the task it has been trained for}.
This trade-off gave rise to a number of variations on theme within the world of RBMs, where the extremal cases are probably given by a machine with
binary (i.e., Boolean) versus real (i.e., Gaussian) weights, equipped with a binary visible layer and a real hidden layer: in the present work we will focus on both these cases and we will try to highlight equivalences (but also crucial differences) among these extrema.

Before presenting in details these cases, it is useful to summarize the mechanisms underlying the functioning of a standard RBM and, to this aim, we now introduce its effective Hamiltonian {(or ``cost function'' in a machine learning jargon)} as
\be
H_N(\sigma,z|\xi,\theta)= -\frac{1}{\sqrt{N}}\sum_{i=1}^{N} \sum_{\mu=1}^P \xi_i^{\mu}\sigma_i z_{\mu}-\sum_{i=1}^N \theta_i \sigma_i,
\ee
where $\sigma_i$ ($i \in [1, ..., N]$) denotes the state of the $i$-th visible unit, $z_{\mu}$ ($\mu \in [1, ..., P]$) denotes the state of the $\mu$-th hidden unit, $\xi_i^{\mu}$ denotes the weight associated to the link connecting the neurons labelled $i$ and $\mu$, and the factor $1/\sqrt{N}$ ensures the linear extensivity of the Hamiltonian with respect to the system volume \cite{Barra6}. The scalars $\theta_i$ ($i \in [1, ..., N]$) {can be interpreted as external fields acting on the visible units and provide thresholds for neuron firing: given a certain internal field $\sum_{\mu} \xi_i^{\mu} z_{\mu} / \sqrt{N}$ over $\sigma_i$, the larger $\theta_i$ and the more likely for the $i$-th neuron to fire, namely to be in an active state $\sigma_i=+1$.}
\newline
Now, this system is made to evolve by applying algorithms mimicking cognitive processes \cite{BM1,Hopfield}. More precisely, one splits {\em cognition} into two separate acts, namely distinguishing between {\em learning} (information) and {\em retrieval} (of the learnt information).
The former occurs on a slower time scale and implies a synaptic dynamics which is modeled by properly rearranging the set of weights and thresholds.
The latter occurs on a faster time scale and implies a neuronal dynamics which is modeled by properly rearranging the spin configuration, while keeping the weights quenched.
Given the gap between the time scales characterizing these dynamical processes\footnote{{The time scale for neuronal spikes is order of $50$ ms, while the time scale for synaptic rearrangement is order of hours and it takes order of weeks to consolidate}.}, one can treat them adiabatically, as done in the following subsections: the next one is dedicated
to synaptic dynamics (i.e., rearrangement of the weights), while the successive one to neural dynamics (i.e., rearrangement of the spins).

\subsection{A brief digression on slow variable's dynamics: learning}
In this subsection we focus on the algorithms underlying the learning stage and which imply the dynamic of weights (we refer to \cite{Coolen} for a more extensive treatment).
As mentioned in the beginning of Sec.~\ref{sec:overview}, the goal is to obtain an inner representation $P(\sigma|{\xi,\theta})$ which approximates $Q(\sigma)$; this is usually achieved by the minimization of the Kullback-Leibler cross entropy $D(Q,P)$, defined as
\be\label{KL}
D(Q,P) \dot{=} \sum_{\sigma}Q(\sigma)\ln \left[ \frac{Q(\sigma)}{P(\sigma)}\right],
\ee
where the sum runs over all the possible configurations of the visible layer and we have dropped the dependence on the parameters $(\xi,\theta)$ of $P(\sigma|\xi,\theta)$ to lighten the notation. To the same purpose we also introduce $\tilde{\xi}_i^{\mu} \dot{=} \xi_i^{\mu}/\sqrt{N}$, $\forall i, \mu$.
Notice that $D(Q,P)$ is minimal (and equal to zero) {if and only if} $P(\sigma)$ and $Q(\sigma)$ are identical. Now, by updating the weights {and the thresholds} by a gradient descent rule
\begin{eqnarray}
\Delta \tilde{\xi}_i^{\mu} &=& - \epsilon \frac{\partial D(Q,P)}{\partial \tilde{\xi}_i^{\mu}},\\
\Delta \theta_i &=& - \epsilon \frac{\partial D(Q,P)}{\partial \theta_i},
\end{eqnarray}
where $\epsilon$ is a small parameter {(also called learning rate)}, we get
\begin{eqnarray} \label{contrasto}
\Delta D(Q,P) &=& \sum_{i,\mu} \frac{\partial D(Q,P)}{\partial \tilde{\xi}_i^{\mu}}\Delta \tilde{\xi}_i^{\mu} +
\sum_{i} \frac{\partial D(Q,P)}{\partial \theta_i}\Delta \theta_i\\ \nonumber
&=& - \epsilon \left[ \sum_{i,\mu}\left( \frac{\partial D(Q,P)}{\partial \tilde{\xi}_i^{\mu}} \right)^2  {+}  \sum_{i}\left( \frac{\partial D(Q,P)}{\partial \theta_i} \right)^2 \right] \leq 0,
\end{eqnarray}
that is the cross-entropy $D(Q,P)$ decreases monotonically until a stationary state is reached (which, still, does not necessarily correspond to $D(Q,P)=0$). Now, in order to make this learning rule an explicit, operational, algorithm a bit of work is still necessary.
A key point is that weights in the BM are symmetric (i.e., they are undirected) and this, for (non-pathologic) stochastic dynamics, implies {\em detailed balance} which, in turn,
ensures that the invariant measure is the Gibbs one given by
\begin{equation}
P(\sigma,z) = \frac{e^{-\beta H_{{N}}(\sigma,z|{\xi,\theta})}}{Z_{P,N}(\beta| {\xi,\theta})},
\end{equation}
where
${Z_{P,N}(\beta|\xi,\theta)}$ is a normalization factor (or ``partition function'' in a Statistical Mechanics jargon \cite{Amit,Coolen}) and $\beta \in \mathbb{R}^+$ encodes for the noise intrinsically present in real data sets (in Physics $\beta$ plays as an inverse temperature, in proper units). {Now, marginalizing $P(\sigma, z)$ over the hidden layer $z$, we get $P(\sigma)$}.
Therefore, the {internal representation of the probability distribution is formally known} and this allows the
construction of explicit learning algorithms, among which the {\em contrastive divergence} that we are going to derive is probably the most applied \cite{DLbook}.
In order to proceed with the construction of a learning algorithm we explicitly define
\begin{eqnarray}
\label{eq:ZN1}
Z_{P,N}(\beta| \xi, \theta) &=& \int_{-\infty}^{+\infty}\prod_{\mu=1}^{P}d \mu (z_{\mu}) \sum_{\sigma}  e^{-\beta H_N(\sigma,z|\xi,\theta)},\\
\label{eq:ZN2}
Z_{P,N}(\beta | \sigma, \xi, \theta) &=& \int_{-\infty}^{+\infty}\prod_{\mu=1}^{P}d \mu (z_{\mu}) e^{-\beta H_N(\sigma,z|\xi ,\theta)},\\ \nonumber
\label{eq:ZN3}
P(\sigma) &=& \int_{-\infty}^{+\infty}\prod_{\mu=1}^{P}d \mu (z_{\mu}) P(\sigma, z) = \int_{-\infty}^{+\infty}\prod_{\mu=1}^{P}d \mu (z_{\mu}) \frac{e^{-\beta H_N(\sigma, z| \xi, \theta)}}{Z_{P,N}(\beta| \xi , \theta)} \\
\label{eq:ZN5}
&=& \frac{Z_{P,N}(\beta|\sigma, \xi, \theta)}{Z_{P,N}(\beta | \xi, \theta)}, \\ \nonumber
\label{eq:ZN4}
P(z| \sigma) &=& \frac{P(\sigma, z)}{P(\sigma)} = \frac{e^{- \beta H_N(\sigma,z|\xi, \theta)}}{Z_{P,N}(\beta|\sigma, \xi, \theta)},
\end{eqnarray}
where, summations are meant over all possible spin configurations and $d \mu (z_{\mu})$ is the Gaussian measure ($d\mu (z_{\mu}) = \exp(-z_{\mu}^2 \beta/2) \sqrt{\beta / (2 \pi)}$, for $\mu=1, ..., P$).
Thus, $Z_{P,N}(\beta | \xi, \theta)$ is the partition function of a system where both variable sets are free to evolve, while $Z_{P,N}(\beta | \sigma, \xi, \theta)$ is the partition function of a system where the visible layer is ``clamped'', namely forced to be in the configuration $\{\sigma\}$ encoded by one of the input data. Also, $P(\sigma)$ is the marginalized distribution and $P(z|\sigma)$ is the distribution for the configuration of the hidden layer being the visible layer clamped.
At this point we have to evaluate each single term inside (\ref{contrasto}):
{
\begin{eqnarray}
\nonumber
\frac{\partial D(Q,P)}{\partial \tilde{\xi}_i^{\mu}} &=& - \sum_{\sigma} Q(\sigma) \frac{\partial \ln P(\sigma)}{\partial \tilde{\xi}_i^{\mu}}\\
\nonumber
&=& - \sum_{\sigma} Q(\sigma) \frac{\partial }{\partial \tilde{\xi}_i^{\mu}} \left(\ln Z_{P,N}(\beta| \sigma, \xi, \theta) - \ln Z_{P,N}(\beta|\xi, \theta)  \right)\\
\nonumber
&=& \beta \sum_{\sigma} Q(\sigma)\Bigg(\int_{-\infty}^{+\infty}\prod_{\mu=1}^{P}d \mu (z_{\mu}) P(z|\sigma,\xi, \theta) \frac{\partial H_N(\sigma,z|\xi , \theta)}{\partial \tilde{\xi}_i^{\mu}} \\
\nonumber
 &&-  \int_{-\infty}^{+\infty} \prod_{\mu=1}^{P}d \mu (z'_{\mu})  \sum_{\sigma'} P(z',\sigma'|\xi, \theta) \frac{\partial H_N(\sigma',z'|\xi , \theta)}{\partial \tilde{\xi}_i^{\mu}} \Bigg) \\
\nonumber
&=& -\beta \Bigg(  \int_{-\infty}^{+\infty}\prod_{\mu=1}^{P}d \mu (z_{\mu})  \sum_{\sigma} Q(\sigma) P(z|\sigma,\xi, \theta)  \sigma_i z_{\mu}  \\
\nonumber
&& - \sum_{\sigma} Q(\sigma)   \int_{-\infty}^{+\infty}\prod_{\mu=1}^{P}d \mu (z_{\mu}') \sum_{\sigma'} P(z',\sigma'|\xi , \theta) \sigma'_i z'_{\mu}     \Bigg),\\
&=& - \beta \left(  \langle \sigma_i z_{\mu} \rangle_{clamped} - \langle \sigma_i z_{\mu}  \rangle_{free} \right),
\end{eqnarray}
where, in the first passage we used the definition (\ref{KL}), recalling that $Q(\sigma)$ does not depend on $\xi$; in the second passage we used (\ref{eq:ZN5}); in the third passage we used (\ref{eq:ZN1}) and (\ref{eq:ZN2}); in the fourth passage we recalled that $\partial H_N(\sigma,z|\xi)/ \partial  \tilde{\xi}_i^{\mu} = - \sigma_i z_{\mu}$ and the subscript {\em clamped} means that the averages of the two-points correlation functions must be evaluated when the visible layer is forced to assume data values, while {\em free}
means that the averages are the standard, statistical-mechanical ones.}
{For the updating rule of the thresholds $\theta_i (i=1,...,N)$, one performs analogous calculations and, recalling $\partial H_N(\sigma,z|\xi)/ \partial  \theta_i = - \sigma_i$, one gets}
\be
\frac{\partial D(Q,P)}{\partial \theta_i} = - \beta \left(  \langle \sigma_i \rangle_{clamped} - \langle \sigma_i \rangle_{free} \right).
\ee
{Thus, the learning rule (\ref{contrasto}) ultimately} tries to make the theoretical
one-point and two-point correlation functions as close as possible to the empirical ones\footnote{This argument can be expanded up to arbitrarily $N$-points correlation functions by paying the price of adding extra hidden layers and this kind of extension is a basic principle underlying Deep Learning \cite{DL1}.}. {Under this rule the} machine will eventually be able to reproduce the statistics of the training data correctly, and this means that the {parameters $(\xi, \theta)$} have been rearranged such that, if the machine is now asked to generate vectors with $P(\sigma)$, the statistical properties of these vectors will coincide with those of the input data generated by $Q(\sigma)$. In this case we say that the machine {\em has learnt} a representation of the reality it has been fed with. Note that this approach allows a proper statistical reproduction of mean averages and variances, hence, when $Q(\sigma)$ {violates the central limit theorem, a two-layer RBM is no longer suitable for statistical inference}.

\subsection{A brief digression on fast variable's dynamics: retrieval}
{After the learning stage, the machine undergoes a final check over another bulk of data}, referred to as {\em test set}, which stems from the same distribution that has generated the training set \cite{DLbook}. {To fix ideas, let us assume that the machine was trained for retrieval tasks\footnote{{In the neural network jargon with ``retrieval'' we mean that the network is supposed to have learnt a set of data (also called ``patterns''), say a set of pictures, and now it is given, as input, one of those pictures but corrupted or incomplete. If well-performing, the network has to provide, as output, a picture that is as close as possible, to the original one.}}; if the trained machine is able to retrieve correctly the items in the test set, then the test is passed and the machine is ready for the usage}\footnote{The capability to accomplish pattern recognition, hence to
perform as an associative memory, depends on several factors as, e.g., the amount of noise in the data, the amount of data with respect to the amount of neurons to deal with them, etc. \cite{Amit}.}. {In order to move from the learning mode to the retrieval mode, the hidden layer is marginalized over: as we are going to show, following this procedure we end up with a Hopfield model (that is the standard model for pattern retrieval \cite{Hopfield}), where each feature learnt by the hidden layer corresponds to one of the learnt patterns and the optimal parameters $(\xi, \theta)$ store information about the whole set of learnt patterns \cite{Barra0,Agliari1}.}

{To see this duality between the RBM and the Hopfield model we look at the temporal evolution of the neurons which can be described by the following stochastic differential equation and map (to fix ideas we take hidden units as continuous and visible units as binary, as before)}
\begin{eqnarray} \label{eq:dyn}
\frac{d z_{\mu}(t)}{dt} &=& - z_{\mu}(t) + \frac{1}{\sqrt{N}} \sum_{{i=1}}^N \xi_i^{\mu} \sigma_i + \sqrt{\frac{2}{\beta}} \eta_{\mu}(t),\\ \label{s-dyn}
{\sigma}_i(t) &=& \textrm{sign}\left[\tanh \left( \frac{\beta}{\sqrt{N}} \sum_{\mu=1}^P \xi_i^{\mu}z_{\mu} + {\beta \theta}_i \right) + \tilde{\eta}_i {(t)} \right].
\end{eqnarray}
{In the previous equation} we used $t$ to denote the time and we set the typical timescale of the variables $(\sigma, z)$ as unitary; also, we denoted with $\eta, \ \tilde{\eta}$ standard Gaussian
white noises  with zero mean and covariance $\langle \eta_{\mu}(t) \eta_{\nu}(t') \rangle = \delta_{\mu \nu} \delta (t-t')$. {Notice that, in the temporal evolution of the visible (respectively hidden) units, the hidden (respectively visible) units are taken as fixed (see also \cite{Coolen}).\\ Let us now focus on} the hidden layer dynamics:
the first term in the right-hand side of eq.~(\ref{eq:dyn}) is the standard leakage term and the second term is the input signal over the hidden layer. This dynamics overall defines an Ornstein-Uhlembeck process, whose equilibrium distribution, at fixed $\sigma$'s, reads as
\begin{equation}\label{features}
P(z_{\mu}|\sigma) = \sqrt{\frac{\beta}{2\pi}} \exp\left[ -\frac{\beta}{2} \left( z_{\mu} -  \frac{1}{\sqrt{N}} \sum_{i=1}^N \xi_i^{\mu} \sigma_i \right)^2 \right].
\end{equation}
Since the hidden units are independent in the RBMs under study, we can write $P(z|\sigma)=\prod_{\mu=1}^P P(z_{\mu}|\sigma)$.
\newline
{As for the dynamics of} the visible layer, each spin perceives an effective field (that is the sum of the overall signal and the
threshold for firing) that is compared with the noise {in such a way that} if the signal prevails over the noise the neuron spikes. Hence, for the $\sigma$'s, we can write
\begin{equation}
P(\sigma_i|z) = \frac{e^{\frac{\beta}{\sqrt{N}} \sigma_i \sum_{\mu}^P \xi_i^{\mu} z_{\mu} {+ \beta \theta_i \sigma_i} }}{2\cosh\left(\beta \sum_{\mu}^P \xi_i^{\mu} z_{\mu} /\sqrt{N} {+ \beta \theta_i } \right)},
\end{equation}
and, again, $P(\sigma|z) = \prod_{i=1}^N P(\sigma_i|z)$. In order to get
the joint distribution $P(\sigma,z)$ and the marginal distributions $P(\sigma)$, we use Bayes' Theorem, i.e. $P(\sigma, z)=P(\sigma|z)P(z)=P(z|\sigma)P(\sigma)$, hence getting
\begin{eqnarray}\label{pre-hopfield}
P(\sigma,z) &\propto& \exp\left[ -\frac{\beta}{2}\sum_{\mu=1}^P z_{\mu}^2 + \frac{\beta}{\sqrt{N}} \sum_{i=1}^{N}\sum_{\mu=1}^{P} \xi_i^{\mu} \sigma_i z_{\mu} \right],\\ \label{hopfield}
P(\sigma) &\propto& \exp \left[ \frac{\beta}{2N}\sum_{i,j=1}^N \left(\sum_{\mu=1}^P \xi_i^{\mu}\xi_j^{\mu} \right) \sigma_i \sigma_j \right].
\end{eqnarray}
Remarkably, one can see that the features learnt by the machine (see eq. (\ref{features})) {correspond to}
the patterns that the machine will successively be able to retrieve (see eq. (\ref{hopfield})), as this last equation is nothing but the Gibbs probability distribution for the original Hopfield model \cite{Amit,Hopfield}.
\newline
In order to understand this from a statistical-mechanics perspective
it is useful to re-write eq. (\ref{features}) and  eq. (\ref{hopfield}) in terms of the so-called Mattis
magnetization, defined as
\begin{equation}\label{Mattis}
m_{\mu} \dot{=} \frac1N \sum_{i=1}^N \xi_i^{\mu}\sigma_i,
\end{equation}
which represents the overlap between the (visible) spin configuration and the $\mu$-th pattern {$\xi^{\mu}$ (that is, a vector of length $N$)}.
This results in the following two equations
\begin{eqnarray}\label{workingBM}
P(z_{\mu}|m_{\mu}) &=& \sqrt{\frac{\beta}{2\pi}} \exp\left[ -\frac{\beta}{2} \left( z_{\mu} - \sqrt{N}m_{\mu} \right)^2 \right],\\ \label{workingHN}
P(m_{\mu}) &\propto& \exp \left( \frac{N \beta}{2} \sum_{\mu=1}^P m_{\mu}^2 \right).
\end{eqnarray}
Interestingly, equation (\ref{workingBM}) shows that the $z_{\mu}$'s are Gaussian variables centered at the (properly rescaled) value of the related Mattis magnetization,
and can therefore be interpreted as {\em features detectors} because they ``discover'' the
correct Mattis magnetization, given the available patterns.  Furthermore, equation (\ref{workingHN}) {can be interpreted as a Boltzmann factor for the} Hamiltonian $H \propto - \sum_{\mu} m_{\mu}^2$, therefore, a minimum energy argument suffices
to show that, when possible (i.e., in the low-noise limit), the most convenient $\sigma$-states are those such that $m_{\mu} \to \pm 1$ for some $\mu$. In other words,
once these features have been resolved from the data (and automatically stored as patterns in the network) during training, further data will be classified according to these features. An alternative, useful perspective to tackle BM learning can be
understood by looking at eq. (\ref{eq:dyn}): one can check that the signal term in that equation is the rescaled $\mu$-{th} Mattis magnetization, hence if
the visible layer is fed with a pattern to be retrieved (i.e., a vector containing one of the learnt features), then the corresponding Mattis magnetization {rise, hence} supporting the growth of its relative hidden variable $z_{\mu}$, as a staggered magnetic field. On the other hand, if the state $\sigma$ does not coincide with the $\mu$-{th} pattern, $m_{\mu}$ will be zero in the large volume limit, and no net signal will be felt by the corresponding feature detector $z_{\mu}$.
Also, still at this qualitative level, eq. (\ref{workingHN}) shows that the landscape where the $\sigma$'s are allowed to relax
is the same as the landscape of a set of $P$ Curie-Weiss models; recalling that in the Curie-Weiss model the magnetization (in modulus) goes to one in the small-noise limit, here we have that
at least one Mattis magnetization (in modulus) will tend to one in the zero noise limit\footnote{This picture is rather intuitive in the low-load regime (e.g., when $P$ remains finite in the thermodynamic limit), while, in the high-load regime, one should combine a number of Curie-Weiss models that grows linearly with the system size and the landscape exhibits a glassy component.}. To fix ideas, let us say $m_1=1$: from definition (\ref{Mattis}) we realize that this is possible if and only if the $\sigma$'s are parallel (in modulus) to the weights coded in $\xi^1$, and this situation corresponds to the {\em retrieval} of pattern $\xi^1$. Note that, exactly as for the ferromagnets described by the Curie-Weiss model \cite{Amit}, the crucial observable
to detect a retrieval phase (or a ferromagnetic phase) is the model free-energy: by studying this observable (and its derivatives) with respect to the system parameters one can build the phase diagram of the model highlighting regions (and their boundaries) where the emerging behavior of the system is qualitatively different  \cite{Barra0}.
\begin{figure}[tb]
\centering
	\includegraphics[width=8cm]{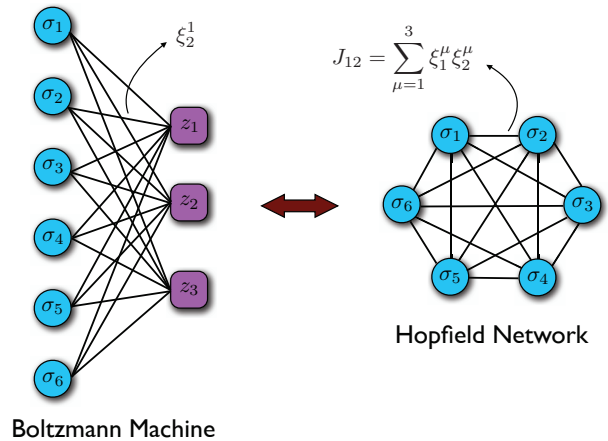}
	\caption{\textbf{Schematic representation of the (restricted) Boltzmann machine (left panel) and its corresponding dual, the Hopfield network.} Left panel: example of a RBM equipped with six neurons in the visible layer, $\sigma_1, ..., \sigma_6$ and three neurons in the hidden layer $z_1, ..., z_3$. The weights among the two layers are coded by the $N \times P$ matrix $\xi_i^{\mu}$. Right panel: dual example of the corresponding Hopfield model, obtained by marginalization over the hidden variables. This network uses solely the $\sigma_1,...,\sigma_6$ neurons, whose links however are now arranged according to the Hebb prescription for learning, that is $J_{ij}=\sum_{\mu=1}^3 \xi_i^{\mu}\xi_j^{\mu}$.}\label{fig:uno}
\end{figure}

We close this section pointing out that for \emph{discrete} weights/patterns the contrastive divergence algorithm shown in the learning section can not be applied
as it requires stochastic descent over the weights that must therefore be \emph{real} (and differentiable). For discrete pattern's entries the most widespread learning rule is the Hebbian {one, which} simply consists in storing patterns {\em adiabatically}, one after another, up to the saturation of the memory: calling $J_{ij}$ the effective coupling between the
two neurons $\sigma_i$ and $\sigma_j$, given $P$ patterns $\xi^{\mu}$, namely $P$
vectors of length $N$ of binary entries, such a prescription results in
\be\label{Hebb}
J_{ij} = \sum_{\mu=1}^P \xi_i^{\mu}\xi_j^{\mu},
\ee
that, formally, does coincide with the {coupling emerging from the duality between RBMs and Hopfield networks, as highlighted by } eq. (\ref{workingHN}).

Before proceeding {with the presentation of theorems for the RBMs}, a couple of remarks are in order. First, for the sake of simplicity, in the following we will omit thresholds $\theta_i (i=1,...,N)$ because their mathematical treatment is trivial as they do not involve correlations
between spins; this does not imply any loss of generality, as thresholds can always be re-introduced at any moment.
Moreover, in the next sections, we will work with random patterns, symmetrically distributed around zero: by a Shannon's compression argument this
is the worst choice and therefore makes our assertions fully general with respect to the choice of patterns. In fact, if the network is able to cope
with $P$ fully random patterns, it will certainly be able to cope with at least $P$ {patterns displaying some degree of correlation (this is the case in real-world applications where correlations are always present, at least, due to the finiteness of $P$, and $N$).}

\section{Restricted Boltzmann machines with Boolean patterns} \label{sec:BT}
 In this section, assuming the existence of the infinite-volume limit for the free energy of RBM with Boolean patterns,
 we will prove the self-average property of the free-energy around its quenched expression and we will give the explicit expression of its annealed approximation.

\subsection{Preliminary Definitions}
\begin{Definition}
We consider a RBM described by the following Hamiltonian:
\begin{equation}\label{BMprima}
H_N(\sigma,z|\xi)=-\frac{1}{\sqrt{N}} \sum_{i=1}^N \sum_{\mu=1}^P \xi^\mu_i z_\mu \sigma_i,
\end{equation}
where the $\sigma_i$ $(i \in [1,...,N])$ are $N$ Ising spins forming the visible layer, the $z_\mu$ $(\mu \in [1,...,P])$ are $P$ Gaussian $\mathcal{N}[0,\beta^{-1}]$
spins forming the hidden layer and the $\xi^\mu_i$ $(i \in [1,...,N], \mu \in [1,..,P])$ are i.i.d. random variables with discrete values $\pm 1$, which
provide the weight of the link connecting the visible unit $i$ and the hidden unit $\mu$.
\end{Definition}
We assume that the two layers scale linearly in their reciprocal volumes, i.e., $\lim_{N\rightarrow \infty} P/N =  \alpha \in \mathbb{R}^+$. This
regime corresponds to the so-called high-load \cite{Coolen}, in contrast with the so-called low-load regime characterized by a vanishing ratio $P/N$ as $N \rightarrow \infty$: the latter is by far less tricky as it does not require the introduction of replicas \cite{Amit,Barra1}, while the former
is the mathematically challenging one \cite{Vieri,Mezard,Rodriguez,Florent1}.
\newline
As anticipated, we are interested in the thermodynamic properties of the neurons (i.e., the {\em fast variables}) in both the parties, while the
weights (i.e., the {\em slow variables}) are assumed to be quenched and drawn randomly from the distribution
\be
P(\xi_i^{\mu}= +1)=P(\xi_i^{\mu}= -1)=1/2,
\ee
independently for all $i= 1,..., N$ and all $\mu = 1, ..., P$.
\begin{Definition}
We introduce the (random) partition function $Z_{P,N}(\beta|\xi)$ as
\begin{equation}\label{laZeta}
Z_{P,N}(\beta|\xi)= \int_{-\infty}^{+\infty} \prod_{\mu=1}^P dz_\mu \sqrt{\frac{\beta}{2 \pi}} e^{-z^2_\mu \beta/2} \sum_{\sigma} e^{-\beta H_N{(\sigma, z | \xi)}},
\end{equation}
and the usual definitions of the intensive free energy $A(\alpha,\beta)$, of the quenched intensive free energy $A^Q(\alpha,\beta)$ and of the annealed intensive free energy $A^A(\alpha,\beta)$ as, respectively,
\begin{eqnarray}
A(\alpha,\beta) &=& \lim_{N \to \infty} A_{P,N}(\beta,\xi), \ \ \  A_{P,N}(\beta, {\xi})  \dot=  \frac{1}{N} \log Z_{P,N}(\beta|\xi),\\
\label{eq:defaq}
A^Q(\alpha,\beta) &=& \lim_{N \to \infty} A^Q_{P,N}(\beta), \ \ \  A_{P,N}^Q(\beta)  \dot=  \frac{1}{N} \mathbb{E} \log Z_{P,N}(\beta|\xi),\\
\label{eq:defaa}
A^A(\alpha,\beta) &=& \lim_{N \to \infty} A^A_{P,N}(\beta), \ \ \  A_{P,N}^A(\beta)  \dot=  \frac{1}{N} \log \mathbb{E} Z_{P,N}(\beta|\xi),
\end{eqnarray}
where the subscripts $P,N$ highlight when we are working at finite volume and the symbol $\mathbb{E}$ represents the average over the quenched variables, that is
\begin{equation}
\mathbb{E}  ~  \dot=\frac{1}{2^{P N}}  \sum_{ \{\xi^\mu_i = \pm 1\}_ {i=1,..,N}^{\mu=1,..,P}}.
\end{equation}
{In the following, if not otherwise specified, the thermodynamic observables are meant as intensive. Also, we recall that the free energy we use is simply a rescaling of the (probably more popular) free energy $f_{P,N} (\alpha,\beta) = -\beta^{-1} A_{P,N}(\alpha,\beta)$.}
\end{Definition}
\begin{Definition}
Given an observable $y(\sigma,z|\xi)$ depending on the neuron configuration and on the weights, once the partition function (\ref{laZeta}) is introduced, we can also define the product Boltzmann state over $s$ replicas of the system as
\be
\nonumber
\Omega(y(\sigma,z|\xi))= \omega_1(y(\sigma^{{1}},z^{{1}}|\xi)) \bigotimes ... \bigotimes \omega_s(y(\sigma^s,z^s|\xi)),
\ee
where {$\omega_i$ represents the Gibbs measure for the $a$-th replica ($a=1,...,s$), that is}
$$
\omega_a(y(\sigma^a,z^a|\xi))= \frac{{\prod_{\mu=1}^P} \int_{-\infty}^{+\infty} dz^a_\mu e^{-(z^a_\mu)^2\beta/2} \sqrt{\frac{\beta}{2 \pi}} \sum_{\sigma^a} y(\sigma^a,z^a|\xi) e^{-\beta H_N {(\sigma^a, z^a | \xi)} }}{Z_{P,N}(\beta|\xi)}.
$$
Finally, we introduce the symbol $\langle y(\sigma,z|\xi) \rangle$ to mean
$$
\langle y(\sigma,z|\xi) \rangle = \mathbb{E}\Omega\left( y(\sigma,z|\xi)\right).
$$
\end{Definition}

\subsection{Main results}

In this subsection  we prove the main Theorem on the self-averaging properties of the free energy $A_{P,N}( \beta, \xi)$ (around its quenched expression, in the thermodynamic limit) of the machine defined by eq. (\ref{BMprima}), then we give the explicit expression for its annealed approximation  $A^A(\alpha, \beta)$.

\begin{Theorem}
The free energy of the Boltzmann machine defined by the Hamiltonian (\ref{BMprima}) is a self-averaging quantity: in the thermodynamic limit its fluctuations vanish and force the latter over its quenched expectation, i.e.
\begin{equation}\label{theorem1}
\lim_{N \to \infty} \mathbb{E}\left [  \big ( A_{P,N}(\beta|\xi) - \mathbb{E} \left(A_{P,N}(\beta|\xi) \right)  \big)^2 \right]=0.
\end{equation}
\end{Theorem}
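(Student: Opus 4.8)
The plan is to realise the centered free energy as a sum of martingale increments in the quenched disorder and to control its second moment by exploiting the $1/\sqrt{N}$ normalisation of the Hamiltonian. As a preliminary simplification I would perform the Gaussian integral over the hidden layer in (\ref{laZeta}), which collapses the model onto its Hopfield dual $Z_{P,N}(\beta|\xi)=\sum_{\sigma}\exp(\frac{\beta}{2N}\sum_{\mu=1}^{P}(\sum_{i=1}^{N}\xi_i^{\mu}\sigma_i)^2)$, cf. (\ref{hopfield}); this turns $\log Z_{P,N}$ into a function of the $PN$ independent Boolean weights $\xi_i^{\mu}$ alone, with only bounded Ising spins surviving. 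One may instead keep the $z_\mu$ explicit, the sole extra ingredient then being the elementary Gaussian identity $\omega(z_\mu^2\mid\sigma)=\beta^{-1}+N m_\mu^2$, which controls the relevant moments; both routes lead to the same estimate.

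Next I would order the weights arbitrarily as $\xi_1,\dots,\xi_{PN}$ and build the matryoshka of $\sigma$-algebras $\mathcal{F}_k=\sigma(\xi_1,\dots,\xi_k)$, with $\mathcal{F}_0$ trivial and $\mathcal{F}_{PN}$ the full one, setting $M_k=\mathbb{E}[\log Z_{P,N}\mid\mathcal{F}_k]$. Then $M_0=\mathbb{E}\log Z_{P,N}$, $M_{PN}=\log Z_{P,N}$, and $(M_k)$ is a Doob martingale. Writing $d_k=M_k-M_{k-1}$ and using orthogonality of the increments, one gets $\mathbb{E}[(\log Z_{P,N}-\mathbb{E}\log Z_{P,N})^2]=\sum_{k=1}^{PN}\mathbb{E}[d_k^2]$, so that, since $A_{P,N}=N^{-1}\log Z_{P,N}$, the statement (\ref{theorem1}) reduces to proving $\sum_k\mathbb{E}[d_k^2]=o(N^2)$.

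The core step is the estimate of a single increment. Because $\xi_i^{\mu}\in\{\pm1\}$, resampling this one weight changes $\log Z_{P,N}$ by at most $\int_{-1}^{1}|\partial_t\log Z_{P,N}|\,dt$, where treating $\xi_i^{\mu}=t$ as continuous gives $\partial_t\log Z_{P,N}=\beta\,\omega(\sigma_i m_\mu)$ (equivalently $\frac{\beta}{\sqrt{N}}\omega(z_\mu\sigma_i)$). Using $|\sigma_i|=1$ and Cauchy-Schwarz one obtains $\mathbb{E}[d_{i\mu}^2]\le C\beta^2\,\mathbb{E}[\omega(m_\mu^2)]$. I stress that the naive uniform bound $|d_{i\mu}|\le 2\beta$ is useless here: with $PN\sim\alpha N^2$ increments it only yields $\mathrm{Var}(\log Z_{P,N})=O(N^2)$, i.e.\ a bounded rather than vanishing variance for $A_{P,N}$, and an Azuma estimate built on it fails to concentrate. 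Beating this crude bound is precisely the obstacle to overcome.

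The ingredient that unlocks the proof, and the step I expect to be hardest, is the a priori control of the summed overlaps $\sum_{\mu=1}^{P}m_\mu^2$. Since $\sum_\mu m_\mu^2=N^{-2}\,\sigma^{\top} J\,\sigma$ with $J_{ij}=\sum_\mu\xi_i^{\mu}\xi_j^{\mu}$ the Hebb matrix (\ref{Hebb}), one has the deterministic bound $\sum_\mu m_\mu^2\le \lambda_{\max}(J)/N$ uniformly in $\sigma$, whence $\omega(\sum_\mu m_\mu^2)\le\lambda_{\max}(J)/N$ for every disorder realisation, and the Marchenko-Pastur / Bai-Yin law gives $\mathbb{E}[\lambda_{\max}(J)/N]\le(1+\sqrt{\alpha})^2+o(1)$. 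Summing the per-increment estimate over $i$ and $\mu$ then yields $\sum_{i,\mu}\mathbb{E}[d_{i\mu}^2]\le C\beta^2 N\,\mathbb{E}[\omega(\sum_\mu m_\mu^2)]=O(N)$, so that $\mathrm{Var}(A_{P,N})=N^{-2}\mathrm{Var}(\log Z_{P,N})=O(1/N)\to0$, which is (\ref{theorem1}). The delicate point is enforcing this spectral bound simultaneously in $\sigma$, under the Gibbs state $\omega$, and under the disorder expectation $\mathbb{E}$; a softer route that avoids sharp random-matrix input is to bound $\omega(\sum_\mu m_\mu^2)$ through the energy density $N^{-1}\omega(H_N)$, manifestly $O(1)$ for the intensive model, at the price of a weaker but still sufficient constant.
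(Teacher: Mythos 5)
Your proof is correct in outline and reaches the same $O(1/N)$ rate for the variance, but it takes a genuinely different route from the paper's. The paper filters the disorder by visible neurons rather than by single weights: it sets $\mathcal{R}_k^N=\{\xi_i^\mu\}_{i\ge k}^{\mu=1,\dots,P}$, so the extensive free energy decomposes into only $N$ martingale increments $\Psi_k$, each of which integrates out the entire row of $P$ weights attached to neuron $k$. Each $\Psi_k$ is then bounded by an $O(1)$ constant through a cavity interpolation $\Phi_k(t)=H_k+tR_k$ (removing the $k$-th spin) combined with the convexity of $t\mapsto g_k(t)=\log Z_{P,N}(\Phi_k(t))-\log Z_{P,N}(\Phi_k(0))$, which sandwiches $g_k(1)$ between $g_k'(0)$ and $g_k'(1)$; the second moment of $g_k'(1)$, i.e.\ of the thermal average of the single-neuron interaction $R_k$, is then asserted to be bounded using the independence of the $\xi_k^\mu$ and $P/N\to\alpha$. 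Your decomposition into $PN$ single-weight increments is finer, and it forces you to beat the trivial bounded-differences estimate; you do so via the per-increment bound $C\beta^2\,\mathbb{E}[\omega(m_\mu^2)]$ resummed through the deterministic inequality $\sum_\mu m_\mu^2\le\lambda_{\max}(J)/N$ and a random-matrix control of $\mathbb{E}\,\lambda_{\max}(J)$. What your route buys is that it makes explicit the a priori overlap control which, in the paper's argument, is hidden inside the largely unargued assertion $\mathbb{E}[g_k'(1)^2]\le C$ (a bound that in fact requires a control of $\tfrac{1}{N}\sum_\mu\omega(z_\mu^2)$-type quantities of exactly the kind you supply); what it costs is the import of an external spectral estimate, whereas the paper's row-wise grouping and convexity trick keep the argument self-contained and closer to the original Pastur--Shcherbina--Tirozzi scheme. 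The technical points you flag (extending the partition function to real values of a Boolean weight, the $t$-dependence of the Gibbs measure along the interpolation, uniformity of the spectral bound in $\sigma$) are genuine but manageable and do not constitute gaps.
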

If the thermodynamic limit of the free energy exists, its quenched expectation is bounded by its annealed one, i.e., $A^Q(\alpha, \beta) = \lim_{N \to \infty} N^{-1}\mathbb{E}\log Z_{P,N}(\beta|\xi) \leq \lim_{N \to \infty} N^{-1}\log \mathbb{E} Z_{P,N}(\beta|\xi) = A^A(\alpha, \beta)$. In fact, for small $\beta$ (that is, $\beta<1$), $A^A(\alpha,\beta)$ coincides with the quenched free energy $A^Q(\alpha,\beta)$, and, in general, it works as its upper bound thanks to Jensen inequality \cite{Guerra1,Guerra2}.
This motivates the next proposition.
\begin{proposition}
The asymptotic value of the annealed free-energy of the RBM is
\begin{equation}
A^A(\alpha,\beta)=\lim_{N \to \infty} \frac{1}{N}\ln \mathbb{E}[ Z_{P,N}(\beta|\xi) ]= \ln 2-\frac{\alpha}{2}\ln(1-\beta).
\end{equation}
\end{proposition}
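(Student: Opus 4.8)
The plan is to exploit the fact that the annealed average lets us carry the quenched expectation $\mathbb{E}$ inside the configuration sum and the Gaussian integral, so that the $\xi$-dependence factorizes completely. Writing out $-\beta H_N(\sigma,z|\xi)=\frac{\beta}{\sqrt N}\sum_{i,\mu}\xi_i^\mu z_\mu\sigma_i$ in (\ref{laZeta}), the Boltzmann weight becomes a product over the $(i,\mu)$ links, and since the $\xi_i^\mu$ are i.i.d.\ $\pm1$ the expectation factorizes over these links. First I would compute, for a single link, $\mathbb{E}\,e^{(\beta/\sqrt N)\xi_i^\mu z_\mu\sigma_i}=\cosh(\beta z_\mu\sigma_i/\sqrt N)=\cosh(\beta z_\mu/\sqrt N)$, the last equality because $\sigma_i=\pm1$ and $\cosh$ is even. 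Crucially this is independent of $\sigma$, so the sum over the $2^N$ visible configurations contributes a flat factor $2^N$, and the integral over the hidden variables factorizes over $\mu$.

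This reduces the problem to a one-dimensional integral: $\mathbb{E}[Z_{P,N}]=2^N I_N^{\,P}$ with $I_N=\int_{-\infty}^{+\infty}dz\,\sqrt{\beta/2\pi}\,e^{-\beta z^2/2}\cosh^N(\beta z/\sqrt N)$, whence $\frac1N\ln\mathbb{E}[Z_{P,N}]=\ln 2+\frac PN\ln I_N$. Since $P/N\to\alpha$, it only remains to show $\lim_N I_N=(1-\beta)^{-1/2}$, equivalently $\ln I_N\to-\tfrac12\ln(1-\beta)$.

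For the asymptotics of $I_N$ I would write $\cosh^N(\beta z/\sqrt N)=\exp[N\ln\cosh(\beta z/\sqrt N)]$ and use $\ln\cosh(x)=x^2/2+O(x^4)$ with $x=\beta z/\sqrt N\to0$, giving the pointwise limit $N\ln\cosh(\beta z/\sqrt N)\to\beta^2z^2/2$. Hence the integrand converges pointwise to $\sqrt{\beta/2\pi}\,e^{-\beta(1-\beta)z^2/2}$, a Gaussian whose integral is exactly $(1-\beta)^{-1/2}$. To pass the limit inside I would invoke dominated convergence: the elementary inequality $\cosh x\le e^{x^2/2}$ yields $\cosh^N(\beta z/\sqrt N)\le e^{\beta^2z^2/2}$ uniformly in $N$, so the integrand is dominated by $\sqrt{\beta/2\pi}\,e^{-\beta(1-\beta)z^2/2}$. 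This dominating function is integrable precisely when $\beta<1$, which is also the regime where the claimed expression is finite; in this range dominated convergence applies and $I_N\to(1-\beta)^{-1/2}$. Collecting the pieces gives $A^A(\alpha,\beta)=\ln 2-\tfrac\alpha2\ln(1-\beta)$.

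The main obstacle is precisely the justification of the interchange of limit and integral in the last step: the pointwise limit alone does not suffice, and the clean way out is the uniform bound $\cosh x\le e^{x^2/2}$, which simultaneously supplies an integrable dominating function and pins down the convergence threshold $\beta<1$ (for $\beta\ge1$ the integral diverges and the annealed free energy ceases to exist, consistent with the $\ln(1-\beta)$ singularity in the stated formula).
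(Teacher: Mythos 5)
Your proof is correct, and it takes a genuinely different route from the paper's. You average over the Boolean weights $\xi$ \emph{first}, exploiting the link-wise factorization $\mathbb{E}\,e^{(\beta/\sqrt N)\xi_i^\mu z_\mu\sigma_i}=\cosh(\beta z_\mu/\sqrt N)$, which kills the $\sigma$-dependence outright, yields the clean factorization $\mathbb{E}[Z_{P,N}]=2^N I_N^P$, and reduces everything to the asymptotics of a single one-dimensional integral controlled by dominated convergence via $\cosh x\le e^{x^2/2}$. The paper instead integrates out the Gaussian hidden variables first, expands the resulting $\exp\bigl[(\beta/2N)\mathcal{K}_\mu^2\bigr]$ as a power series, and computes the moments $\mathbb{E}[\mathcal{K}_\mu^{2k}]$, arguing that only the leading Wick-pairing contribution $(2k-1)!!\,N^k$ survives and resumming to $(1-\beta)^{-1/2}$. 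The two computations are dual to each other (the $(2k-1)!!$ are exactly the Gaussian moments your limiting integrand produces), but yours is arguably tighter on the analysis: the paper's claim that ``subleading terms get vanishing under logarithm and normalization'' tacitly interchanges the sum over $k$ with the large-$N$ limit without a domination argument, whereas your DCT step supplies precisely that justification and, as a bonus, pins down $\beta<1$ as the exact range of validity of the annealed expression. The paper's route, on the other hand, makes the combinatorial (Wick/pairing) structure of the noise explicit, which is the structure reused later in the universality argument of Section~\ref{sec:signal}.
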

The proof of Theorem 1 can be found in the Appendix A, while the proof of Proposition 1  is reported hereafter\footnote{It is worth comparing the expression for the free-energy reported in Proposition 1 and the one obtained in \cite{Barra5} for an Hopfield model in the high-load regime. The latter displays an additional term, that is $A^A(\alpha,\beta)=\ln 2-\frac{\alpha}{2}\ln(1-\beta) - \alpha \beta/2$ which stems from the diagonal terms $(P/2)$ in the Hamiltonian $H_N(\sigma|\xi) = - \frac{1}{N} \sum_{i<j} \sum_{\mu} \xi_i^{\mu} \xi_j^{\mu} \sigma_i \sigma_j = - \frac{1}{2N} \sum_{i,j} \sum_{\mu} \xi_i^{\mu} \xi_j^{\mu} \sigma_i \sigma_j + \frac{P}{2}$, which in this work are neglected.}.
\newline
\newline
\textbf{Proof} ~ Recalling the partition function of the model
\begin{equation}
Z_{P,N}(\beta|\xi)= \sum_{\sigma} \prod_{\mu=1}^P  \int_{-\infty}^{+\infty}  d\mu(z_{\mu})  e^{\left( \frac{\beta}{\sqrt{N}}  \sum_i \xi_i^{\mu} \sigma_i \right) z_{\mu} },
\end{equation}
we perform the Gaussian integration to get
\begin{equation} \label{eq:mine}
Z_{P,N}(\beta|\xi)= \sum_{\sigma}  \prod_{\mu=1}^P \sum_{k=0}^{\infty}
\left ( \frac{\beta}{2 N}\right)^k \frac{\mathcal{K}_{\mu}(\sigma| \xi)^{2k}}{k!} ,
\end{equation}
where, for brevity, we posed $\mathcal{K}_{\mu}(\sigma| \xi) = \sum_{i=1}^N \xi_i^{\mu} \sigma_i$. We now average over the weights $\{ \xi_i^{\mu}\}$ to get $\mathbb{E}[Z_{P,N}(\beta|\xi)]$ and, to this goal, we focus on $\mathbb{E} [ \mathcal{K}_{\mu}(\sigma| \xi)^{2k}]$:
\begin{eqnarray}
\nonumber
\mathbb{E} [\mathcal{K}_{\mu}(\sigma| \xi)^{0}  ]&=& 1,\\
\nonumber
\mathbb{E} [\mathcal{K}_{\mu}(\sigma| \xi)^{2} ] &=& \sum_{i,j=1}^N \sigma_i \sigma_j  \mathbb{E} [\xi_i^{\mu}\xi_j^{\mu}] = \sum_{i,j=1}^N \sigma_i \sigma_j \delta_{ij} = N,\\
\nonumber
\mathbb{E}[\mathcal{K}_{\mu}(\sigma| \xi)^{4} ] &=& \sum_{i,j,k,l=1}^N \sigma_i \sigma_j \sigma_k \sigma_l  \mathbb{E}[ \xi_i^{\mu}\xi_j^{\mu} \xi_k^{\mu}\xi_l^{\mu} ],  \\
\nonumber
&=& \sum_{i,j,k,l=1}^N [(\delta_{ij}\delta_{kl} + \delta_{ik}\delta_{jl}  + \delta_{il}\delta_{jk}) + \delta_{ijkl} -3 \delta_{ijkl}] \sigma_i \sigma_j \sigma_{k} \sigma_l = 3N^2 - 2N,
\end{eqnarray}
where we exploited the orthogonality of the weights and, in particular, in the last line the non-null contributions correspond to pair-wise equal indices and four-wise equal indices, properly accounting for repetitions.
This calculation can be generalized to higher order, yet it gets more awkward and, ultimately, not necessary to our purposes. In fact, one can notice that, as $N \rightarrow \infty$, the leading term for $\mathbb{E} [\mathcal{K}_{\mu}(\sigma| \xi)^{2k} ]$ is order $N^k$ while subleading terms get vanishing under logarithm and normalization, as prescribed by the formula (\ref{eq:defaa}). Such leading term is given by $\prod_{i=0}^{k-1}(2k-2i-1)N^k$, which accounts for only pair-wise contributions, and
\begin{eqnarray} \label{eq:b1}
   \left [ 1 +\sum_{k=1}^{\infty}  \left ( \frac{\beta}{2N}\right)^k \frac{\prod_{i=0}^{k-1}(2k-2i-1) N^k}{k!} \right] = \frac{1}{\sqrt{1-\beta}}.
\end{eqnarray}
Consequently, stressing that the average over $\xi$ allows getting rid of the $\sigma$ variables, so that the sum over all $\sigma$ configurations simply provides a factor $2^N$, and the dependence on $\mu$ is dropped so the product over $\mu$ simply provides a power $P$, we can write
\begin{equation}
\lim_{N \rightarrow \infty} \frac{1}{N} \ln \mathbb{E}[ Z_{P,N}(\beta|\xi)] =  \lim_{N \rightarrow \infty}  \frac{1}{N}\ln \left[  2^N (1 - \beta)^{-P/2} \right] = \ln 2 -  \frac{\alpha}{2} \ln(1 - \beta).
\end{equation}

\section{Restricted Boltzmann machines with Gaussian patterns.} \label{sec:FG}
In this section, assuming the existence of the infinite-volume limit for the free energy of the RBM with Gaussian patterns, we will provide an explicit expression for the quenched free energy, under the assumption of replica symmetry.
\newline
We remark that in this entire section we will cover solely the technical aspects from a mathematical perspective, while we leave the next section for technical aspects from an applicative perspective.

\subsection{Preliminary Definitions}
\begin{Definition}
We consider a RBM described by the following Hamiltonian
\begin{equation}\label{BMseconda}
H_N(\sigma, z |\xi)=-\frac{1}{\sqrt{N}} \sum_{i=1}^N \sum_{\mu=1}^P \xi^\mu_i  z_\mu \sigma_i,
\end{equation}
where $\sigma_i$ $(i \in [1,...,N])$ are $N$ Ising spins forming the visible layer, $z_\mu$ $(\mu \in [1,...,P])$ are $P$ Ising spins forming the hidden layer, while $\xi^\mu_i$ $(i \in [1,...,N], \mu \in [1,..,P])$ are quenched and, in particular, this time are chosen as i.i.d. Gaussian random variables $\mathcal{N}[0,1]$, namely the weights connecting the two layers are analogical \cite{Barra3,Barra5}.
\end{Definition}
Again, we assume the two layers to scale linearly in the reciprocal volume, i.e., $\lim_{N \rightarrow \infty} P/N = \alpha  \in \mathbb{R}^+$.
The definitions of the partition function and the free energy (as well as its quenched and annealed expectations) coupled to the Hamiltonian (\ref{BMseconda}) are analogous to those given in the previous section, apart for the average over the patterns: the symbol $\mathbb{E}$ now represents the average over all the Gaussian distributed (and no longer Boolean) pattern entries, namely
\be
\mathbb{E} \dot{=}  \int_{-\infty}^{+\infty}\prod_{i=1}^{N} \prod_{\mu=1}^{P} d \xi_i^{\mu} \frac{e^{- (\xi_{i}^{\mu})^2/2}}{\sqrt{2 \pi}}.
\ee
\begin{Definition}
The (spin-glass) order parameters of the RBM defined by eq. (\ref{BMseconda}) are the {two-replica} overlap in the visible layer $q_{12}(\sigma)$
and the two-replica overlap in the hidden layer $p_{12}(z)$ defined as
$$
q_{12} \dot= \frac{1}{N}\sum_{i=1}^N \sigma_i^1 \sigma_i^2, \ \ \ \ \ \ \ \   p_{12} \dot= \frac{1}{P}\sum_{\mu=1}^P z_{\mu}^1 z_{\mu}^2,
$$
\end{Definition}

\subsection{Preliminary results}

In this subsection we present our strategy of investigation; namely we prove some
theorems and propositions whose implications will be exploited in the following subsection. Taken a real scalar $n \in (0,1]$
\be\label{ReplicaTrick}
\mathbb{E}\ln Z_{P,N}(\beta|\xi) = \lim_{n\to 0}\frac{1}{n}\ln\mathbb{E}\left[Z^{n}(\beta|\xi)\right],
\ee
and we use this relation in order to write $A^Q(\alpha,\beta)=\lim_{N\to\infty}\lim_{n \to 0} \phi_{P,N}(\beta,n)$, where
\begin{definition}
The $n$-quenched free energy $\phi_{P,N}(\beta,n)$ has the following interpolating functional
\be \label{eq:defphi}
\phi_{P,N}(\beta,n) = \frac{1}{Nn} \ln \mathbb{E} \left[ Z_{P,N}^n(\beta|\xi)\right].
\ee
\end{definition}
The scalar $n$ is thought of as an interpolating parameter such that $\phi_{P,N}(\beta,n)$ recovers the annealed free-energy when $n=1$ and the quenched free-energy when $n \rightarrow 0$. The former statement can be checked by directly replacing $n=1$ in (\ref{eq:defphi}) and comparing with (\ref{eq:defaq}); the latter statement is given by the following
\begin{Theorem}\label{ThNew1}
The following relations between the $n$-quenched free energy and the quenched free energy exist
\be \label{eq:th2}
\lim_{n\to 0} \phi_{P,N}(\beta,n) = A^Q_{P,N}(\beta),
\ee
furthermore
\be \label{eq:secthe}
\phi_{P,N}(\beta,n) \geq A^Q_{P,N}(\beta),
\ee
for any $n$.
\end{Theorem}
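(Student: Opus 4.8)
The plan is to prove the two assertions separately, since they rest on different elementary mechanisms: the inequality (\ref{eq:secthe}) is an immediate consequence of Jensen's inequality, whereas the identity (\ref{eq:th2}) is the finite-volume justification of the replica representation (\ref{ReplicaTrick}) and needs a controlled $n\to 0$ expansion. Throughout I work at fixed $N,P$, where $Z_{P,N}(\beta|\xi)$ is a strictly positive random variable (a finite sum of positive Boltzmann weights). Since the Hamiltonian (\ref{BMseconda}) is linear in the disorder $\xi$ for fixed spin configurations, $\ln Z_{P,N}(\beta|\xi)$ is sandwiched between two affine functions of $\sum_{i,\mu}|\xi_i^{\mu}|$; concretely one has the lower bound $\ln Z_{P,N}\geq (N+P)\ln 2$ (pairing $z$ with $-z$ and using $\cosh\geq 1$) and the upper bound $\ln Z_{P,N}\leq (N+P)\ln 2 + (\beta/\sqrt{N})\sum_{i,\mu}|\xi_i^{\mu}|$. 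These provide all the finite-volume integrability the argument needs, namely $\mathbb{E}|\ln Z_{P,N}|<\infty$ and $\mathbb{E}[Z_{P,N}^{n}]<\infty$ for every $n\in(0,1]$.

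For the inequality (\ref{eq:secthe}) I would write $Z^{n}=e^{n\ln Z}$ and use that $x\mapsto e^{nx}$ is convex for every real $n$ (its second derivative $n^{2}e^{nx}$ is nonnegative). Jensen's inequality applied to the quenched average $\mathbb{E}$ then gives
\be
\mathbb{E}\left[Z_{P,N}^{n}(\beta|\xi)\right]=\mathbb{E}\left[e^{n\ln Z_{P,N}(\beta|\xi)}\right]\geq e^{\,n\,\mathbb{E}\ln Z_{P,N}(\beta|\xi)}.
\ee
Taking logarithms and dividing by $Nn>0$ (which preserves the inequality direction) yields $\phi_{P,N}(\beta,n)\geq N^{-1}\mathbb{E}\ln Z_{P,N}(\beta|\xi)=A^{Q}_{P,N}(\beta)$, which is exactly (\ref{eq:secthe}) and in fact holds for all $n>0$.

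For the identity (\ref{eq:th2}) I set $g(n)\dot{=}\mathbb{E}[Z_{P,N}^{n}(\beta|\xi)]$, so that $g(0)=1$ and $\phi_{P,N}(\beta,n)=(Nn)^{-1}\ln g(n)$. Because $\ln g(0)=0$, the limit $\lim_{n\to 0}(Nn)^{-1}\ln g(n)$ equals $N^{-1}(\ln g)'(0^{+})=N^{-1}g'(0^{+})$, so it suffices to show $g'(0^{+})=\mathbb{E}\ln Z_{P,N}(\beta|\xi)$. Formally $g'(n)=\mathbb{E}[Z^{n}\ln Z]$, hence $g'(0^{+})$ is the desired quenched average, and the whole content of the statement is the legitimacy of differentiating under $\mathbb{E}$. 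The hard part is precisely this interchange: I would dominate the difference quotient $(Z^{n}-1)/n=(e^{n\ln Z}-1)/n$ by $|\ln Z|\,e^{n_{0}|\ln Z|}$ uniformly for $n\in(0,n_{0}]$ and invoke dominated convergence after verifying $\mathbb{E}\big[|\ln Z|\,e^{n_{0}|\ln Z|}\big]<\infty$ for small $n_{0}$; this follows from the affine-in-$|\xi|$ bound on $|\ln Z|$ above, since a linear functional of i.i.d.\ $\mathcal{N}[0,1]$ variables has all exponential moments. Letting $n\to 0^{+}$ then gives $g'(0^{+})=\mathbb{E}[\ln Z]$ and establishes (\ref{eq:th2}). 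As a cross-check, note that $n\mapsto\phi_{P,N}(\beta,n)=N^{-1}\ln\|Z_{P,N}\|_{n}$ is the (rescaled) logarithm of an $L^{n}$ norm, hence monotone non-decreasing in $n$, with $n\to 0^{+}$ limit equal to the geometric-mean value $N^{-1}\mathbb{E}\ln Z_{P,N}$ and $n=1$ value $A^{A}_{P,N}(\beta)$; this reproduces both (\ref{eq:secthe}) and the chain $A^{Q}_{P,N}\leq\phi_{P,N}(\beta,n)\leq A^{A}_{P,N}$ anticipated after the statement.
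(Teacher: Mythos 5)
Your proof takes essentially the same route as the paper's: the bound (\ref{eq:secthe}) is Jensen's inequality applied to the convex map $x\mapsto e^{nx}$, and the limit (\ref{eq:th2}) is the first-order expansion of $\ln\mathbb{E}[Z_{P,N}^{n}]$ at $n=0$, whose linear coefficient is $\mathbb{E}[\ln Z_{P,N}]$. You add a dominated-convergence justification for differentiating under $\mathbb{E}$ that the paper leaves implicit; the only caveat is that if the hidden units are Gaussian (as in the interpolation scheme that follows), integrating out $z$ makes $\ln Z_{P,N}$ quadratic rather than affine in the $\xi_i^{\mu}$, so the exponential-moment bound $\mathbb{E}\bigl[|\ln Z|\,e^{n_{0}|\ln Z|}\bigr]<\infty$ holds only for $n_{0}$ small compared with $1/\beta$, which still suffices for the $n\to 0$ limit.
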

%
\begin{proof}
We can Taylor-expand the $n$-quenched free energy for $n$ close to $0$ to get
\begin{eqnarray}
\nonumber
\ln \mathbb{E}\left[Z_{P,N}^n(\beta|\xi) \right] &=& 0 + \frac{\mathbb{E} [Z^n_{P,N}(\beta|\xi) \log Z^n_{P,N}(\beta|\xi)] }{\mathbb{E} Z^n_{P,N}(\beta|\xi)} n  + o(n^2) \\
\nonumber
&=& \mathbb{E}\left[ \ln Z_{P,N}(\beta|\xi) \right] \cdot n + o(n^2) \Rightarrow \\
\lim_{n \to 0} \phi_{P,N}(\beta,n) &=& \lim_{n \to 0} \frac{1}{nN}\left[\mathbb{E}\left( \ln Z_{P,N}(\beta|\xi) \cdot n + o(n^2) \right) \right] = A^Q_{P,N}(\beta),
\end{eqnarray}
which proves (\ref{eq:th2}), while (\ref{eq:secthe}) is guaranteed by the Jensen inequality.
\end{proof}

\subsection{Main results}
Our strategy is based on two interchained interpolation schemes: one, anticipated before, that works on replicas, the other, described hereafter, that works on spin coupling.
Before proceeding, we stress that we will exploit our strategy solely within the so-called {\em replica symmetric approximation}, namely under the assumption that, in the asymptotic limit, the order parameters do not fluctuate (i.e., their distributions get delta-peaked over their thermodynamic averages).
\newline
Let us now introduce the interpolating structure acting on couplings. First, we give a few definitions which will lighten the calculations along the way.
\begin{definition}
We introduce, as an interpolating parameter, a real scalar $t \in [0,1]$, further we need $N$ i.i.d. $\mathcal{N}[0,1]$ random variables, referred to as $J_i, \ \ \ i \in (1,...,N)$, and $P$ i.i.d. $\mathcal{N}[0,1]$ random variables, referred to as $J_{\mu},\ \ \ \mu \in (1,...,P)$, and we define the interpolating partition function $Z_t$ as
\be\label{Z-interpolante}
Z_t = \sum_{\sigma} \int_{-\infty}^{+\infty}\prod_{\mu=1}^{P}d \mu(z_{\mu}) e^{\sqrt{t}\left( \frac{\sqrt{\beta}}{\sqrt{N}}\sum_{i,\mu}^{N,P}\xi_i^{\mu}\sigma_i z_{\mu} \right) + \sqrt{1-t}\left( \sqrt{\alpha \beta \bar{p}} \sum_{i}^{N} J_i \sigma_i + \sqrt{ \beta \bar{q}} \sum_{\mu}^{P} J_{\mu}z_{\mu}\right) + (1-t)\beta (1-\bar{q}) \sum_{\mu}^{P} \frac{z_{\mu}^2}{2}}.
\ee
Notice that when $t=1$ we recover the partition function $Z_{P,N}(\beta|\xi)$ of the RBM under study, while when $t=0$ we get the partition function of a one-body model for the variables $\sigma$ and $z$.
We also remark that now the quenched expectation $\mathbb{E}$ applies also over $(J_i, J_{\mu})$, i.e., $\mathbb{E}=\prod_{i,\mu}\mathbb{E}_{\xi_i^{\mu}}\prod_{i,\mu}\mathbb{E}_{J_i}\mathbb{E}_{J_{\mu}}$.
\end{definition}
\begin{definition}
Given a smooth function $f(\sigma,z|\xi)$ of the neurons $(\sigma,z)$ and of the patterns $\xi$, we introduce a deformed measure $\langle f(\sigma,z|\xi) \rangle_n$ as
\be
\langle f(\sigma,z|\xi) \rangle_n \equiv \mathbb{E}\left[Z_t^n \cdot (\mathbb{E}\left(Z_t^n\right))^{-1} \cdot \Omega\left( f(\sigma,z|\xi)\right)  \right],
\ee
where $\Omega$ is the standard generalized $2$-product Boltzmann state, namely $\Omega = \omega_1 \bigotimes \omega_2$.
\newline
Note that this new measure collapses on the standard one for $n\to 0$, namely $\langle f(\sigma,z|\xi) \rangle_n \to \langle f(\sigma,z|\xi) \rangle$ as $n \to 0$.
\end{definition}
\begin{definition}
Once introduced a  scalar parameter $t \in [0,1]$, the following interpolating functional
\be  \label{eq:phibetant}
\phi_{P,N}(\beta, n,t) = \frac{1}{Nn} \ln \mathbb{E}\left( Z_t^n \right)
\ee
bridges the $n$-quenched free energy of the RBM to an integrable one-body model.
\end{definition}
With these definitions we can state
\begin{proposition}
The following boundary values for the interpolating $n$-quenched free energy hold
\begin{eqnarray}
\phi_{P,N}(\beta, n,t=1) &=& \frac{1}{Nn} \ln \mathbb{E}\left( Z_1^n \right) = \phi_{P,N}(\beta,n), \\ \nonumber
\phi_{P,N}(\beta, n,t=0) &=& \ln 2 + \frac{1}{n}\ln \int_{-\infty}^{+\infty}d\mu(x)\cosh^n \left(x \sqrt{\alpha \beta \bar{p} } \right) - \frac{\alpha}{2}\ln\left[ 1-\beta \left(1-\bar{q}\right) \right]\\ &+& \frac{\alpha \beta}{2} \frac{\bar{q}}{1-\beta\left(1-\bar{q}\right)}.
\end{eqnarray}%
\end{proposition}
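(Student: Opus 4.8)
The plan is to evaluate the interpolating functional $\phi_{P,N}(\beta,n,t)$ of (\ref{eq:phibetant}) at its two endpoints, exploiting that at $t=1$ the interpolating partition function collapses onto the genuine model, while at $t=0$ it factorizes completely over the single units. The endpoint $t=1$ is immediate: setting $t=1$ in (\ref{Z-interpolante}) makes the prefactors $\sqrt{1-t}$ and $(1-t)$ vanish, so the two one-body source terms and the quadratic $z_\mu^2$ term drop out and only the two-layer interaction survives; hence $Z_1$ reduces to $Z_{P,N}(\beta|\xi)$ and, inserting this into (\ref{eq:phibetant}), one recovers exactly the definition (\ref{eq:defphi}) of $\phi_{P,N}(\beta,n)$, with no computation required.

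For $t=0$ the interaction term disappears and the Boltzmann weight becomes a product of independent single-site contributions, so $Z_0$ splits into a visible factor $\sum_{\sigma}\exp(\sqrt{\alpha\beta\bar{p}}\sum_i J_i\sigma_i)=2^N\prod_i\cosh(\sqrt{\alpha\beta\bar{p}}J_i)$ times a hidden factor $\prod_\mu\int d\mu(z_\mu)\exp[\sqrt{\beta\bar{q}}J_\mu z_\mu+\tfrac{\beta(1-\bar{q})}{2}z_\mu^2]$. I would then perform each one-dimensional Gaussian integral by completing the square: the $z_\mu^2$ term renormalizes the Gaussian variance and yields, per hidden unit, the factor $[1-\beta(1-\bar{q})]^{-1/2}\exp\big(\tfrac{\beta\bar{q}J_\mu^2}{2[1-\beta(1-\bar{q})]}\big)$ (this step needs $\beta(1-\bar{q})<1$ for convergence, and it is precisely where the $\ln[1-\beta(1-\bar{q})]$ contribution originates).

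Raising $Z_0$ to the power $n$ and applying $\mathbb{E}$ — which now acts only on the i.i.d. Gaussians $\{J_i\},\{J_\mu\}$, since $\xi$ has dropped out — factorizes the average by independence. The visible part gives $2^{Nn}(\int d\mu(x)\cosh^n(x\sqrt{\alpha\beta\bar{p}}))^N$, whose $\tfrac{1}{Nn}\ln$ produces the first two terms $\ln 2+\tfrac{1}{n}\ln\int d\mu(x)\cosh^n(x\sqrt{\alpha\beta\bar{p}})$; the Gaussian prefactors give $[1-\beta(1-\bar{q})]^{-Pn/2}$, accounting for $-\tfrac{\alpha}{2}\ln[1-\beta(1-\bar{q})]$ once $P/N\to\alpha$ is used. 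What remains is the factor $\exp\big(\tfrac{n\beta\bar{q}}{2[1-\beta(1-\bar{q})]}\sum_\mu J_\mu^2\big)$ generated by the hidden units.

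The step I expect to be the crux is the treatment of this residual $\sum_\mu J_\mu^2$. Evaluating the Gaussian expectation $\mathbb{E}[e^{cJ^2}]=(1-2c)^{-1/2}$ exactly would give $-\tfrac{\alpha}{2n}\ln[1-\tfrac{n\beta\bar{q}}{1-\beta(1-\bar{q})}]$, which matches the stated last term $+\tfrac{\alpha\beta}{2}\tfrac{\bar{q}}{1-\beta(1-\bar{q})}$ only in the limit $n\to0$. To obtain the closed form as written, one instead uses that $\tfrac{1}{P}\sum_\mu J_\mu^2\to 1$ almost surely by the strong law of large numbers, so in the infinite-volume limit $\sum_\mu J_\mu^2$ may be replaced by $P\,\mathbb{E}[J^2]=P$; this linearizes the exponent and, after $\tfrac{1}{Nn}\ln$ with $P/N\to\alpha$, delivers exactly $+\tfrac{\alpha\beta}{2}\tfrac{\bar{q}}{1-\beta(1-\bar{q})}$. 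Justifying this replacement — equivalently, controlling the fluctuations of $\sum_\mu J_\mu^2$ uniformly in $n$ — is the only genuinely delicate point; collecting the four contributions then reproduces the claimed $t=0$ boundary value.
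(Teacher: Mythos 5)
Your route is the same as the paper's: evaluate the two endpoints of the interpolation, note that at $t=1$ the one-body sources and the quadratic compensator carry vanishing prefactors so $Z_1=Z_{P,N}(\beta|\xi)$, and at $t=0$ factorize $Z_0$ into a visible Bernoulli--Gaussian block and a hidden block handled by completing the square (with the same convergence requirement $\beta(1-\bar q)<1$). The first three contributions you collect coincide with the paper's $P_1$ and the first half of its $P_2$.

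On the ``crux'' you single out, your diagnosis is right but your proposed repair is not. The residual hidden-layer factor is
\[
\prod_{\mu=1}^{P}\mathbb{E}_{J_\mu}\!\left[\exp\!\left(\frac{n\beta\bar q\,J_\mu^2}{2[1-\beta(1-\bar q)]}\right)\right]
=\left(1-\frac{n\beta\bar q}{1-\beta(1-\bar q)}\right)^{-P/2},
\]
a product of $P$ \emph{independent one-dimensional} Gaussian integrals, each with an exponent of order one in $P$. There is no sum $\sum_\mu J_\mu^2$ sitting inside a single expectation whose fluctuations need to be controlled, so the strong law of large numbers is not the relevant tool; and as a matter of principle one cannot replace a random variable by its typical value inside an exponential moment whose exponent scales with the system size---such moments are governed by large deviations, not by almost-sure limits. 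The exact contribution is therefore $-\frac{\alpha}{2n}\ln\!\big[1-\frac{n\beta\bar q}{1-\beta(1-\bar q)}\big]$, exactly as your first computation gives, and it equals the stated $\frac{\alpha\beta}{2}\frac{\bar q}{1-\beta(1-\bar q)}$ only up to corrections of order $n$. The paper's own proof performs the same factorization and silently writes the $n\to 0$ form of this term; since the construction is ultimately used only through $\lim_{n\to 0}\phi_{P,N}(\beta,n,\cdot)$ (Theorem 2 and the derivation of $A^Q$), the discrepancy is harmless downstream, but the Proposition as literally stated for general $n$ holds only with the last term understood as the leading order in $n$. You should state this plainly rather than invoke a concentration argument that does not apply.
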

\begin{Theorem}
In the replica symmetric approximation, the thermodynamic limit of the free energy $A_{P,N}(\beta,\xi)$ converges to its quenched expectation, that reads as
\begin{eqnarray}
\nonumber
A^Q(\alpha,\beta) &=& \ln 2 +   \int_{-\infty}^{+\infty}d\mu(x) \ln\cosh  \left(x  \sqrt{\alpha \beta  \bar{p} } \right)
 -\frac{\alpha}{2}\ln \left[1-\beta(1-\bar{q}) \right]\\
 \label{RS-solution}
 &+& \frac{\alpha \beta}{2} \frac{\bar{q}}{1-\beta(1-\bar{q}) } -\frac{\alpha \beta}{2} \bar{p} \left(1 -  \bar{q} \right).
\end{eqnarray}
whose extremization with respect to $\bar{q},\ \bar{p}$ returns
\begin{eqnarray}\label{self-x-pbarra}
\frac{\partial A^Q(\alpha,\beta)}{\partial \bar{q}} &=& 0 \Rightarrow \bar{p} = \frac{ \beta \bar{q}}{\left[1-\beta \left(1-\bar{q}\right)\right]^2}, \\
\frac{\partial A^Q(\alpha,\beta)}{\partial \bar{p}} &=& 0 \Rightarrow \bar{q} = \int d \mu(x) \tanh^2\left(x  \sqrt{\alpha \beta \bar{p}}  \right).
\end{eqnarray}
\end{Theorem}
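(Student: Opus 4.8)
The plan is to exploit the coupling interpolation defined above through the fundamental theorem of calculus. By Theorem \ref{ThNew1} we have $A^Q(\alpha,\beta)=\lim_{N\to\infty}\lim_{n\to0}\phi_{P,N}(\beta,n)$, and by Proposition 2 the functional $\phi_{P,N}(\beta,n,t)$ interpolates between $\phi_{P,N}(\beta,n,t=1)=\phi_{P,N}(\beta,n)$ and the explicit one-body value $\phi_{P,N}(\beta,n,t=0)$. First I would write the sum rule $\phi_{P,N}(\beta,n)=\phi_{P,N}(\beta,n,0)+\int_0^1 dt\,\partial_t\phi_{P,N}(\beta,n,t)$, so that the whole problem reduces to evaluating the $t$-derivative of the interpolating free energy and then taking the limits $n\to0$ and $N\to\infty$.

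The core computation is $\partial_t\phi_{P,N}(\beta,n,t)$. Differentiating the exponent of $Z_t$ brings down the genuine two-body term $\tfrac{1}{2\sqrt t}\tfrac{\sqrt\beta}{\sqrt N}\sum_{i,\mu}\xi_i^\mu\sigma_i z_\mu$ together with the compensating one-body terms in $J_i,J_\mu$ and the $z_\mu^2$ contribution. I would then Gaussian-integrate by parts (Wick/Stein lemma) with respect to $\xi_i^\mu$, $J_i$ and $J_\mu$: each contraction replaces a field by a derivative of the Boltzmann weight and generates products of one- and two-replica Boltzmann factors, which reorganize into the overlaps $q_{12}$ and $p_{12}$. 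The key structural point is that the constants $\sqrt{\alpha\beta\bar p}$, $\sqrt{\beta\bar q}$ and $\beta(1-\bar q)$ in the interpolant are chosen precisely so that the terms linear in the overlaps cancel, leaving a constant contribution $-\tfrac{\alpha\beta}{2}\bar p(1-\bar q)$ plus a fluctuation remainder proportional to the deformed average $\langle (q_{12}-\bar q)(p_{12}-\bar p)\rangle_n$. Within the replica-symmetric approximation the overlaps concentrate on $\bar q,\bar p$, so this remainder is dropped and $\partial_t\phi_{P,N}(\beta,n,t)=-\tfrac{\alpha\beta}{2}\bar p(1-\bar q)$ becomes independent of $t$.

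Integrating the constant derivative over $t\in[0,1]$ and inserting the $t=0$ value from Proposition 2 gives $\phi_{P,N}(\beta,n)=\phi_{P,N}(\beta,n,0)-\tfrac{\alpha\beta}{2}\bar p(1-\bar q)$. Taking $n\to0$ I would use $\tfrac{1}{n}\ln\int d\mu(x)\cosh^n(x\sqrt{\alpha\beta\bar p})\to\int d\mu(x)\ln\cosh(x\sqrt{\alpha\beta\bar p})$, obtained by expanding $\cosh^n=e^{n\ln\cosh}$, which turns the $t=0$ boundary term into the explicit entropic and energetic pieces; the $N\to\infty$ limit is then immediate since the resulting expression no longer depends on $N$. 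Collecting terms reproduces exactly the replica-symmetric free energy (\ref{RS-solution}).

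Finally, the self-consistencies follow by extremizing (\ref{RS-solution}) over $\bar q$ and $\bar p$. Setting $\partial_{\bar q}A^Q=0$ is direct algebra: writing $D=1-\beta(1-\bar q)$, the three $\bar q$-dependent terms contribute $-\tfrac{\alpha\beta}{2D}+\tfrac{\alpha\beta}{2}\tfrac{1-\beta}{D^2}+\tfrac{\alpha\beta}{2}\bar p$, whose vanishing yields $\bar p=\beta\bar q/D^2$. Setting $\partial_{\bar p}A^Q=0$ requires one Gaussian integration by parts in $x$ to rewrite $\int d\mu(x)\,x\tanh(x\sqrt{\alpha\beta\bar p})=\sqrt{\alpha\beta\bar p}\int d\mu(x)\,[1-\tanh^2(x\sqrt{\alpha\beta\bar p})]$; the stationarity condition then collapses to $\bar q=\int d\mu(x)\tanh^2(x\sqrt{\alpha\beta\bar p})$. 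I expect the main obstacle to be the bookkeeping of the Wick contractions in $\partial_t\phi_{P,N}(\beta,n,t)$ and the verification that the linear-in-overlap terms cancel against the auxiliary fields: once the replica-symmetric prescription removes the quadratic fluctuation remainder, what remains is the controlled collection of boundary and constant terms together with the elementary extremization above.
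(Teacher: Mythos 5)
Your proposal follows essentially the same route as the paper's Appendix B: the Guerra-style sum rule in $t$, Wick (Stein) integration by parts to express $\partial_t\phi_{P,N}(\beta,n,t)$ in terms of the overlaps, the replica-symmetric concentration $\langle(q_{12}-\bar q)(p_{12}-\bar p)\rangle_n=0$ to reduce the streaming to a constant, assembly with the $t=0$ boundary term of Proposition 2, the $n\to 0$ limit of $\tfrac1n\ln\int d\mu(x)\cosh^n$, and the same extremization algebra. The only (inconsequential for the theorem as stated) difference is that the paper keeps the $n$-dependence of the integrated streaming term, $-\tfrac{\alpha\beta}{2}\bar p\left[1+(n-1)\bar q\right]$, before sending $n\to 0$, whereas you write its $n\to 0$ value $-\tfrac{\alpha\beta}{2}\bar p(1-\bar q)$ directly.
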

\begin{corollario}
The RBM has an ergodic regime (where the annealed approximation holds and $A^A(\alpha,\beta)$ equals the quenched replica symmetric free energy $A^Q(\alpha,\beta)$ evaluated at $\bar{q}=\bar{p}=0$) that is bounded by the following transition line, in the $\alpha,\beta$ plane of tuneable parameters
\be \label{eq:criticalline}
 \beta_c = \lim_{n \to 0} \frac{1}{1+\sqrt{\alpha/(1-n)}} = \frac{1}{1+\sqrt{\alpha}}.
\ee
\end{corollario}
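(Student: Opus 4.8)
The plan is to read off the ergodic (annealed) regime as the region where the only solution of the replica-symmetric self-consistency equations \eqref{self-x-pbarra} is the trivial one $\bar q=\bar p=0$, and to locate \eqref{eq:criticalline} as the threshold at which this solution first loses stability. First I would check that $\bar q=\bar p=0$ solves \eqref{self-x-pbarra} identically: setting $\bar p=0$ makes the argument $x\sqrt{\alpha\beta\bar p}$ of the hyperbolic tangent vanish, so $\bar q=0$, and conversely $\bar q=0$ forces $\bar p=0$. Substituting $\bar q=\bar p=0$ into \eqref{RS-solution} collapses the free energy onto $\ln 2-\frac{\alpha}{2}\ln(1-\beta)$, which is exactly the annealed value $A^A(\alpha,\beta)$ (the Gaussian counterpart of Proposition 1). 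This identifies the trivial solution with the annealed/ergodic phase and fixes the meaning of the statement to be proved.

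Next I would determine where the trivial solution loses stability by linearizing the map \eqref{self-x-pbarra} about the origin. Using $\int d\mu(x)\,x^2=1$ together with $\tanh^2(y)\simeq y^2$ for $y\to 0$, the second equation gives $\bar q\simeq\alpha\beta\,\bar p$; the first, after replacing $1-\beta(1-\bar q)$ by $1-\beta$ at leading order, gives $\bar p\simeq\beta\,\bar q/(1-\beta)^2$. Composing the two linear relations yields $\bar q\simeq\frac{\alpha\beta^2}{(1-\beta)^2}\,\bar q$, so a nontrivial branch bifurcates continuously from the origin precisely when the multiplier $\Lambda(\alpha,\beta)=\alpha\beta^2/(1-\beta)^2$ reaches unity. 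Solving $\alpha\beta^2=(1-\beta)^2$ with $0<\beta<1$ gives $\sqrt{\alpha}\,\beta=1-\beta$, hence $\beta_c=1/(1+\sqrt{\alpha})$. As an independent check I would expand \eqref{RS-solution} to second order in $(\bar q,\bar p)$ about the origin: the linear terms cancel (confirming stationarity), and the Hessian of the resulting quadratic form becomes degenerate exactly on the locus $\alpha\beta^2=(1-\beta)^2$, reproducing the same line.

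To recover the finite-$n$ form $\beta_c=\lim_{n\to 0}[1+\sqrt{\alpha/(1-n)}]^{-1}$ stated in \eqref{eq:criticalline}, I would carry out the very same stability analysis on the interpolating $n$-quenched free energy $\phi_{P,N}(\beta,n)$ rather than on its $n\to 0$ limit \eqref{RS-solution}. The expected effect of keeping $n$ finite, at the relevant quadratic order around the origin, is that the Gaussian averages over the replicated auxiliary fields $J_i,J_\mu$ entering the deformed state $\langle\cdot\rangle_n$ renormalise the effective storage load to $\alpha/(1-n)$, so that the stability multiplier becomes $\Lambda_n=\frac{\alpha}{1-n}\,\frac{\beta^2}{(1-\beta)^2}$. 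Setting $\Lambda_n=1$ gives $\beta_c(n)=[1+\sqrt{\alpha/(1-n)}]^{-1}$, and letting $n\to 0$ yields the claim. I would close by remarking that the $n\to 0$ line coincides with the AGS ergodicity threshold of the dual Hopfield model, consistently with the slow-noise universality invoked earlier in the paper.

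The step I expect to be the main obstacle is the finite-$n$ bookkeeping that produces the load renormalisation $\alpha\mapsto\alpha/(1-n)$: one must track carefully how the tilt $Z_t^n$ reweights the disorder averages in each sector and verify that no further $n$-dependence survives in the quadratic form at the origin, so that the multiplier is exactly $\Lambda_n$ and not some spurious variant. A second, more conceptual point to secure is that the vanishing of the Hessian — equivalently, the linear-stability eigenvalue crossing one — is the correct criterion for ergodicity breaking here, i.e.\ that the transition is continuous and the nontrivial solution bifurcates smoothly from $(\bar q,\bar p)=(0,0)$ rather than appearing discontinuously; this is what legitimates reading off \eqref{eq:criticalline} from a purely local analysis of the replica-symmetric saddle.
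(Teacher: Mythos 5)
Your proposal is correct and follows essentially the same route as the paper: the authors also expand the replica-symmetric self-consistency equations around $\bar q=\bar p=0$ and locate the line where the nontrivial solution bifurcates, obtaining $\alpha\beta^2/[(1-\beta)^2(1-n)]=1$ and then letting $n\to 0$. The finite-$n$ bookkeeping you flag as the main obstacle is already resolved in the paper's stage (iv), where the stationarity condition $\partial\phi/\partial\bar q=0$ yields $\bar p=\beta\bar q/\{(1-n)[1-\beta(1-\bar q)]^2\}$, i.e.\ exactly the $(1-n)$ factor (equivalently your load renormalisation $\alpha\mapsto\alpha/(1-n)$) that you anticipated.
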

The proofs of Proposition 2, Theorem 3 and Corollary 1 can be found in the Appendix B.
\begin{remark}
We note that the critical line delimiting the ergodic region for the RBMs is the same critical line that traces the ergodic boundary for the Hopfield model \cite{Amit,Tirozzi2,Barra5}. However, while here $\alpha$ is the ratio between the hidden and visible layer sizes, in the Hopfield model $\alpha$ is the ratio between the stored patterns and the neurons necessary to handle them.
\newline
Further, the annealed expression for the free energy of this RBM is the same as the one obtained for the model treated in Sec. 3 in the high noise limit (see Proposition $1$). In fact, by setting $\bar{q}=\bar{p}=0$ in eq. (\ref{RS-solution}) we get
\be
A^A(\alpha,\beta) =\ln 2  -\frac{\alpha}{2}\ln \left(1-\beta\right).
\ee
\end{remark}

\section{The signal and the noise} \label{sec:signal}
In this section we deepen the inferential capabilities of these machines and the properties of the inner (slow) noise. More precisely, we introduce a ``signal'', namely an external field which favors the retrieval of a given pattern and, by inspecting its corresponding Mattis magnetization, we check whether and how robustly (with respect to the parameters $(\alpha, \beta$) this pattern can be detected over the noise. Next, we further address the noise and we provide
an argument for its universality to hold, namely we will show that, in the thermodynamic limit, it does not matter if we consider Boolean or Gaussian patterns, they contribute in the same way to the (slow) noise.
\begin{proposition}
We consider one Boolean pattern $\xi^1$ and $P-1$ Gaussian patterns $\xi^{\mu}$, $\mu=2,...,P$; given a signal which favours the retrieval of pattern $\xi^1$, we measure the accuracy of the signal reconstruction in terms of the related Mattis magnetization $m_1(\sigma)$ which fulfils an Hopfield-like self-consistent equation.
\end{proposition}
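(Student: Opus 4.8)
The plan is to carry over the Guerra interpolation of Theorem 3, but to treat pattern $\xi^1$ as \emph{condensed} (carrying a macroscopic overlap) while reabsorbing the $P-1$ Gaussian patterns into the slow noise. First I would split the Hamiltonian (\ref{BMseconda}) into a signal channel ($\mu=1$) and a noise channel ($\mu=2,\dots,P$),
\be
H_N(\sigma,z|\xi) = -\frac{1}{\sqrt N}\sum_{i=1}^N \xi_i^1 \sigma_i z_1 \; - \; \frac{1}{\sqrt N}\sum_{i=1}^N\sum_{\mu=2}^P \xi_i^\mu \sigma_i z_\mu,
\ee
and switch on the external field that favours $\xi^1$. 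Integrating out the (Gaussian) feature detector $z_1$ collapses the signal channel into the standard Hopfield one-body term $\tfrac{\beta}{2}N m_1^2$, with $m_1$ the Mattis magnetization (\ref{Mattis}) of the retrieved pattern; since $\xi^1$ is Boolean while the $\xi^\mu$ with $\mu\ge 2$ are Gaussian, this reproduces the mixed setting of the statement.

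Next I would apply the coupling interpolation (\ref{Z-interpolante}) only to the noise channel $\mu\ge 2$. Under replica symmetry the extensive interaction carried by these patterns is traded, as in Proposition 2, for effective one-body Gaussian fields $\sqrt{\alpha\beta\bar{p}}\,J_i$ on the visible spins and $\sqrt{\beta\bar{q}}\,J_\mu$ on the hidden ones, plus the self-interaction $(1-t)\beta(1-\bar{q})\sum_{\mu\ge 2} z_\mu^2/2$. The only bookkeeping point is that dropping one pattern sends $P\mapsto P-1$, which leaves $\alpha=\lim_{N\to\infty}P/N$ unchanged, so the slow-noise amplitude $\sqrt{\alpha\beta\bar{p}}$ coincides with the one already fixed by the self-consistencies (\ref{self-x-pbarra}). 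The net effect is an integrable one-body model in which each visible spin feels the local field $\beta\xi_i^1 m_1 + \sqrt{\alpha\beta\bar{p}}\,J_i$.

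Finally I would read off the self-consistency by averaging $m_1=N^{-1}\sum_i\xi_i^1\sigma_i$ in this effective model (equivalently, by extremizing the interpolated free energy in $m_1$). Factorisation over sites gives $m_1=\mathbb{E}[\xi^1\tanh(\beta\xi^1 m_1+\sqrt{\alpha\beta\bar{p}}\,J)]$, and using that $\xi^1=\pm1$ is symmetric, that $J$ is symmetric and that $\tanh$ is odd, the $\xi^1$-average collapses to
\be
m_1=\int_{-\infty}^{+\infty} d\mu(x)\,\tanh\!\left(\beta m_1 + x\sqrt{\alpha\beta\bar{p}}\,\right),
\ee
the announced Hopfield-like equation, with $\bar{p},\bar{q}$ fixed by (\ref{self-x-pbarra}).

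The main obstacle I anticipate is not the algebra but the rigorous decoupling of the two channels: one has to show that the interpolation estimate for the $P-1$ Gaussian patterns still closes in the presence of the macroscopic field generated by $\xi^1$, i.e. that in the $t$-derivative of $\phi_{P,N}(\beta,n,t)$ the cross contributions between the condensed overlap $m_1$ and the slow-noise overlaps $q_{12},p_{12}$ vanish in the thermodynamic limit, so that the replica-symmetric ansatz stays consistent and $\langle m_1\rangle$ concentrates. Controlling these cross-terms is where the real work lies.
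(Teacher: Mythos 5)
Your proposal follows essentially the same route as the paper: the authors likewise extend the interpolating partition function (\ref{Z-interpolante}) by adding a signal term (quadratic in $m_1$ at $t=1$, linearized around $\bar m_1$ at $t=0$), treat the remaining $P-1$ patterns as slow noise within the replica-symmetric scheme of Theorem 3, and extremize the resulting free energy in $\bar m_1$ to obtain exactly the self-consistency $\bar m_1=\int d\mu(x)\tanh\left(x\sqrt{\alpha\beta\bar p}+\beta\bar m_1\right)$. Your explicit integration of $z_1$ and the symmetry argument collapsing the $\xi^1$-average are just the unpacked version of what the paper states in one line, so the two arguments coincide.
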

\begin{proof}
The proof works by extending the interpolating scheme coded by eq. (\ref{Z-interpolante}) in order to account for a signal term: in this way we get a self-consistent equation for the related Mattis magnetization $\bar{m}_1$.
\newline
Let us consider the interpolating partition function (\ref{Z-interpolante}): it is enough to add to its exponent the terms $(1-t)  \beta \bar{m}_1 \sum_i \xi_i^1 \sigma_i + t  \beta m_1^2(\sigma) $ to extend the previous quenched free energy, still at the replica symmetric level (i.e., assuming $\lim_{N \to \infty}P(m_1)=\delta(m_1-\bar{m}_1)$ to obtain
\begin{eqnarray}\label{RS-solution2}
A^Q(\alpha,\beta) &=& \ln 2 +   \int_{-\infty}^{+\infty}d\mu(x) \ln\cosh  \left(x  \sqrt{\alpha \beta \bar{p}} + \beta \bar{m}_1  \right)
 -\frac{\alpha}{2}\ln \left[1-\beta(1-\bar{q}) \right] \\ &+& \frac{\alpha \beta}{2} \frac{\bar{q}}{1-\beta(1-\bar{q})} -\frac{\alpha \beta}{2} \bar{p} \left(1 -  \bar{q} \right) - \frac{\beta}{2} \bar{m}_1^2.
\end{eqnarray}
whose extremization w.r.t. $\bar{m}_1$ returns a self consistency equation for the signal that reads
\be \label{eq:scagain}
\frac{\partial A^Q(\alpha,\beta)}{\partial \bar{m}_1} = 0 \Rightarrow \bar{m}_1 = \int_{-\infty}^{+\infty}d\mu(x) \tanh\left(x \sqrt{\alpha \beta  \bar{p}} + \beta \bar{m}_1 \right). \\
\ee
By replacing $\bar{p}$ with the expression in terms of $\bar{q}$ given by (\ref{self-x-pbarra}), eq. (\ref{eq:scagain}) recovers the self-consistent equation for the Mattis magnetization in the Hopfield scenario as traced by the AGS theory \cite{Amit,Tirozzi1}.
\end{proof}
\newline
\newline
This result is naturally consistent with the intrinsic analogies between the Hopfield model and the RBM, as depicted by eqs. (\ref{pre-hopfield},\ref{hopfield}).
\newline
We remark further that the slow noise enters and plays a crucial role in the self-consistency for the signal, thus this should be properly estimated in machine learning applications. It is therefore important to state the next
\begin{proposition}
The noise generated by the not-retrieved patterns over the signal is universal, namely, in the thermodynamic limit, it is the same if the pattern entries are either sampled from a Boolean distribution or from a Gaussian distribution.
\end{proposition}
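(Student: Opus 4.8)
The plan is to show that the non-retrieved patterns $\{\xi^{\mu}\}_{\mu \geq 2}$ enter the free energy, and hence the self-consistency (\ref{eq:scagain}), only through the internal field they generate on the visible spins, and that the law of this field, in the thermodynamic limit, is fixed by the first two moments of the pattern entries alone. Since both the Boolean law $P(\xi = \pm 1) = 1/2$ and the Gaussian law $\mathcal{N}[0,1]$ share zero mean and unit variance, the two noises must coincide. Concretely, I would start from the interpolating partition function (\ref{Z-interpolante}) augmented by the signal term of Proposition 4, and isolate the only place where the non-retrieved entries appear, namely the bilinear coupling $\tfrac{\beta}{\sqrt{N}}\sum_{i}\sum_{\mu\geq 2}\xi_i^{\mu}\sigma_i z_{\mu}$. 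Averaging over the quenched $\xi_i^{\mu}$ factorizes over sites and patterns, and for each entry one has, with $a \equiv \beta/\sqrt{N}$, the elementary identities $\mathbb{E}_{\xi}\,e^{a \xi_i^{\mu}\sigma_i z_{\mu}} = \cosh(a\,\sigma_i z_{\mu})$ in the Boolean case and $\mathbb{E}_{\xi}\,e^{a \xi_i^{\mu}\sigma_i z_{\mu}} = e^{a^2 \sigma_i^2 z_{\mu}^2/2}$ in the Gaussian case.

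The heart of the argument is then a Taylor comparison of the two resulting effective actions for the hidden layer. Using $\sigma_i^2 = 1$ and summing over the $N$ sites, the Boolean entries contribute $\sum_i \ln\cosh(a\,\sigma_i z_{\mu}) = \tfrac{\beta^2}{2} z_{\mu}^2 - \tfrac{\beta^4}{12N} z_{\mu}^4 + O(N^{-2})$, while the Gaussian entries contribute exactly $\tfrac{\beta^2}{2} z_{\mu}^2$. The leading quadratic terms coincide: this is the universal part, which simply renormalizes the hidden-variable measure and yields the noise strength $\sqrt{\alpha\beta\bar{p}}$ appearing in (\ref{eq:scagain}). The two models therefore differ only by the subleading correction $-\tfrac{\beta^4}{12N}\sum_{\mu\geq 2} z_{\mu}^4 + \ldots$ inside the exponent. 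Although this correction is summed over the extensive number $P \sim \alpha N$ of non-retrieved patterns, so that it is of order unity in $\ln Z$, it is therefore $O(1/N)$ in the intensive free energy $N^{-1}\ln Z$ and vanishes in the thermodynamic limit. I would make this rigorous by a one-parameter interpolation between the two effective actions, whose derivative in the interpolating parameter is $-\tfrac{\beta^4}{12N}\langle \sum_{\mu\geq 2} z_{\mu}^4\rangle$, and which integrates to zero provided $N^{-1}\langle \sum_{\mu\geq 2} z_{\mu}^4\rangle$ stays bounded along the path. Equivalently, a Lindeberg swap of the entries one at a time matches the first two moments and leaves a third-order remainder of order $N^{-3/2}$ per entry, accumulating over the $\sim\alpha N^2$ entries to $O(N^{-1/2})$ in the intensive free energy.

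I expect the main obstacle to be precisely the uniform control of the hidden-variable moments. The per-pattern correction is $O(1/N)$, and the interpolation closes, only if the fourth (and higher) moments of the $z_{\mu}$ remain finite under the deformed Boltzmann measure uniformly along the comparison path; this is guaranteed in the replica-symmetric regime, where the effective quadratic action for the hidden layer, proportional to $1-\beta(1-\bar{q})$, is strictly positive and the Gaussian $z$-integral converges, that is, below the ergodicity line (\ref{eq:criticalline}). Establishing the bound there is routine; pushing universality beyond replica symmetry, to the full glassy free energy, is where the moment estimate becomes genuinely delicate.
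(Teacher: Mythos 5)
Your proposal takes essentially the same route as the paper: both arguments rest on the observation that the quenched average of the exponential bilinear coupling $\exp\bigl(\sqrt{\beta/N}\,\sum\xi_i^{\mu}\sigma_i z_{\mu}\bigr)$ is governed, in the large-$N$ limit, by the second moment of the pattern entries alone, which is the same for the symmetric Boolean and standard Gaussian laws. Your version is in fact more quantitative than the paper's one-line ``$\sim$'' (you make the $O(1/N)$ per-pattern remainder, the Lindeberg-type swap, and the needed uniform bounds on the hidden-variable moments explicit, and the extension of your single-replica entry computation to the $n$-replica coupling $\sum_a\sigma_i^a z_{\mu}^a$, which produces the overlap structure $\sum_{a,b}\sigma_i^a\sigma_i^b z_{\mu}^a z_{\mu}^b$, is immediate), so it subsumes the published argument.
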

\begin{proof}
Let us consider the $n$-product of the partition function, taking $\xi^1$ as the retrieved pattern (whose signal is detected by $m_1(\sigma))$ we perform the quenched average over the $P-1$ quenched patterns (overall acting as a slow-noise against the retrieval of $\xi^1$) in complete generality as
\begin{eqnarray}
\mathbb{E}Z_{P,N}^n(\beta|\xi) &=& \mathbb{E} \prod_{a=1}^n \prod_{i=1}^N \sum_{\sigma_i^a} \int_{-\infty}^{+\infty} \prod_{a=1}^n \prod_{\mu=1}^{P}d\mu(z^a_{\mu})e^{\left( \frac{\beta N}{2} \sum_{a}^{n} m_1^2(\sigma^{a})+\sqrt{\frac{\beta}{N}}\sum_{a}^n \sum_{i}^{N}\sum_{\mu}^{P}\xi_i^{\mu}\sigma_i^{a} z_{\mu}^{a} \right)} \\
   &=&  \prod_{a=1}^n \prod_{i=1}^N \sum_{\sigma_i^a}  \int_{-\infty}^{+\infty} \prod_{a=1}^n \prod_{\mu=1}^{P}d\mu(z_{\mu}^a)e^{\left( \frac{\beta N}{2} \sum_{a}^{n} m_1^2(\sigma^{a})\right)}\mathbb{E}e^{\left(\sqrt{\frac{\beta}{N}}\sum_{a}^n \sum_{i}^{N}\sum_{\mu}^{P}\xi_i^{\mu}\sigma_i^{a} z_{\mu}^{a} \right)}
\end{eqnarray}
As the distributions share unitary variance and are both symmetric, the quenched average over the patterns returns
\begin{eqnarray}
\mathbb{E}\left[\exp \left(\sqrt{\frac{\beta}{N}}\sum_{a=1}^n \sum_{i=1}^{N}\sum_{\mu=1}^{P}\xi_i^{\mu}\sigma_i^a z_{\mu}^a \right)\right]
\sim \exp\left(\frac{\beta}{2 N} \sum_{i=1}^{N}\sum_{\mu=1}^{P} \sum_{a,b=1}^{n}\sigma_i^a \sigma_i^b z_{\mu}^a z_{\mu}^b \right)
\end{eqnarray}
\end{proof}
\newline
We remark that, while differently sampled patterns behave as the same source of noise, actually they generate quite different signals \cite{Barra0}, i.e. universality holds just for the noise.
\newline
An interesting feature of these networks, shared by several other glassy systems \cite{Aizenman,GG}, is that their overlaps (i.e., $q_{12}$ and $p_{12}$) display complex, non-Gaussian fluctuations (and their complexity explains why usually researchers work at the replica symmetric level). This is well known in the spin-glass Literature, where the linear constraints fulfilled by fluctuations are usually named Aizenman-Contucci polynomials \cite{Aizenman} (while the more general ones are known as Ghirlanda-Guerra identities \cite{GG}). Nonetheless, this feature is largely ignored among researchers in Machine Learning and hereafter we show that these constraints actually hold also for these networks. In particular, due to the theory developed up to this point, we will focus in details only on the former, namely on the Aizenman-Contucci identities (suitably adapted to the present case):
\begin{proposition}\label{ACcorollary}
In the thermodynamic limit, $\beta$ almost-everywhere, the following Aizenman-Contucci polynomials hold for the RBM's order parameters $q_{12},\ p_{12}$:
\be\label{ACprima}
 \langle q_{12}^2p_{12}^2 \rangle - 4 \langle
q_{12}p_{12}q_{23}p_{23} \rangle + 3 \langle
q_{12}p_{12}q_{34}p_{34}\rangle  =0.\ee
\end{proposition}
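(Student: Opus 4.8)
The plan is to exploit the fact that, for Gaussian weights, the interaction part of the Hamiltonian is itself a centred Gaussian field indexed by the configuration $(\sigma,z)$, whose two--replica covariance is governed precisely by the \emph{product} of the two overlaps. Using $\mathbb{E}[\xi_i^\mu \xi_j^\nu]=\delta_{ij}\delta_{\mu\nu}$ a one--line computation gives
\be
\mathbb{E}\left[ H_N(\sigma^1,z^1|\xi)\, H_N(\sigma^2,z^2|\xi)\right] = \frac{1}{N}\Big(\sum_{i}\sigma_i^1\sigma_i^2\Big)\Big(\sum_{\mu} z_\mu^1 z_\mu^2\Big) = P\, q_{12}\,p_{12},
\ee
so the natural single scalar controlling the disorder geometry is the product $c_{ab}\equiv q_{ab}p_{ab}$. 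This is exactly why the polynomial in the statement is built from $q_{ab}p_{ab}$ rather than from $q$ or $p$ separately, and it is the observation that collapses the bipartite problem onto the standard single--overlap stochastic--stability scheme.

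Next I would introduce a stochastic--stability perturbation: add to the exponent of the partition function an independent Gaussian copy of the couplings, $\sqrt{\lambda/N}\sum_{i,\mu}\hat{\xi}_i^\mu z_\mu\sigma_i$ with $\hat{\xi}$ i.i.d. $\mathcal{N}[0,1]$, and denote by $\langle\cdot\rangle_\lambda$ the associated replicated Gibbs--disorder average. Differentiating $\mathbb{E}\langle c_{12}\rangle_\lambda$ with respect to $\lambda$ and integrating by parts over $\hat{\xi}$ (Wick's theorem), each factor $\hat{\xi}$ is traded for a derivative of a Boltzmann weight, which brings down a fresh covariance $c_{ab}$. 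Differentiating the two \emph{observed} replicas produces the self term $\langle c_{12}^2\rangle$, while differentiating the two partition functions hidden in the normalisation introduces a third and a fourth replica; collecting the contributions with their multiplicities yields exactly
\be
\frac{d}{d\lambda}\,\mathbb{E}\langle c_{12}\rangle_\lambda \;\propto\; \langle q_{12}^2 p_{12}^2\rangle - 4\langle q_{12}p_{12}q_{23}p_{23}\rangle + 3\langle q_{12}p_{12}q_{34}p_{34}\rangle,
\ee
the coefficients $1,-4,3$ being the standard bookkeeping of how many replica pairs a single Wick contraction can join (and consistently summing to zero, so that the identity is trivially satisfied at the replica--symmetric level where $c_{ab}$ is frozen).

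It then remains to show that this derivative vanishes. This is the content of stochastic stability, which here is a corollary of the self--averaging of the free energy established for the digital machine in Theorem~1 and adapted to the Gaussian--weight case: the perturbed quenched pressure $N^{-1}\mathbb{E}\ln Z_\lambda$ is convex in $\lambda$ and converges to a limit that is differentiable for almost every value of the control parameters, whence convexity forces the fluctuation of the extensive perturbing energy, and therefore $\tfrac{d}{d\lambda}\mathbb{E}\langle c_{12}\rangle_\lambda$, to vanish as $N\to\infty$ for $\beta$ almost everywhere. Equating the Wick expansion above to zero gives the claimed identity; equivalently, one may bypass the auxiliary field and read the same relation off the vanishing of the connected variance of the true interaction energy, whose expansion produces the same three monomials with the same coefficients.

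The hard part will be the analytic control behind the self--averaging step rather than the combinatorics of the coefficients. Because the hidden layer carries \emph{unbounded} Gaussian spins $z_\mu$, the concentration estimates that are immediate for bounded Ising variables are not, and one must supply uniform moment bounds on the $z$'s (or, more cleanly, integrate out the hidden layer first, recovering the effective Hebbian coupling $\sum_\mu \xi_i^\mu \xi_j^\mu$ with bounded spins) before invoking the Doob/martingale concentration machinery used in Theorem~1. The second delicate point is the upgrade from ``the $\lambda$--derivative of the limiting pressure vanishes'' to ``the polynomial vanishes pointwise'': this needs a Griffiths--type convexity argument to pass from $L^1$ convergence of derivatives to almost--everywhere vanishing, which is precisely the origin of the qualifier \emph{$\beta$ almost everywhere} in the statement.
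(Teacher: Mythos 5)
Your proposal is correct and follows essentially the same route as the paper: Gaussian integration by parts (Wick's theorem) applied to the quenched average of the product overlap $q_{12}p_{12}$ produces the polynomial $\langle q_{12}^2p_{12}^2\rangle - 4\langle q_{12}p_{12}q_{23}p_{23}\rangle + 3\langle q_{12}p_{12}q_{34}p_{34}\rangle$ with an extensive prefactor, and boundedness of the streaming then forces it to vanish in the thermodynamic limit, $\beta$ almost everywhere. The only cosmetic difference is that you differentiate an auxiliary stochastic-stability field in $\lambda$, whereas the paper streams $\langle q_{12}p_{12}\rangle$ directly in $\beta$; the covariance identification $c_{ab}=q_{ab}p_{ab}$, the $1,-4,3$ combinatorics, and the stability closing argument coincide.
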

\begin{proof}
It is enough to explicitly write down the $\beta$ streaming of their expectation, namely
\begin{eqnarray}
\partial_{\beta} \langle q_{12}p_{12}\rangle &=&
\frac{1}{N P}\sum_{i,\mu} \mathbb{E} \partial_{\beta} \Omega^2
(z_{\mu}\sigma_i) = \frac{1}{N P}\sum_{i,\mu} \mathbb{E} 2 \Omega
(z_{\mu}\sigma_i)\partial_{\beta} \Omega(z_{\mu}\sigma_i) \\ &=&
\frac{2}{N P}\sum_{i,j,\mu, \nu}\mathbb{E}\Omega(z_{\mu}\sigma_i)\xi_j^{\nu}\left[
\Omega(z_{\mu}\sigma_i z_{\nu}\sigma_j)
-\Omega(z_{\mu}\sigma_i)\Omega(z_{\nu}\sigma_j) \right],
\end{eqnarray}
and we use Wick's theorem on $\xi$ to get
\begin{eqnarray}\nonumber
\partial_{\beta} \langle q_{12}p_{12}\rangle &=&
\frac{2}{N^2 P^2}\sum_{\mu,\nu,i,j}\Big \{ \Big [
\Omega(z_{\mu}\sigma_i z_{\nu} \sigma_j)-\Omega(z_{\mu}\sigma_i)\Omega(z_{\nu}\sigma_j)
\Big] \Big [ \Omega(z_{\mu}\sigma_i z_{\nu}\sigma_j)-
\\ \nonumber
&+& \Omega(z_{\mu}\sigma_i)\Omega(z_{\nu}\sigma_j) \Big] +
\Omega(z_{\mu}\sigma_i)\Big[ \Omega(z_{\mu}\sigma_i\sigma_j z_{\nu} z_{\nu}\sigma_j)
- \Omega(z_{\mu}\sigma_i z_{\nu}\sigma_j)\Omega(z_{\nu}\sigma_j)
\\ \nonumber
&-&
\Omega(z_{\mu}\sigma_i)\Omega(z_{\mu}\sigma_i z_{\nu}\sigma_j)\Omega(z_{\nu}\sigma_j)
+
\Omega(z_{\mu}\sigma_i)\Omega(z_{\nu}\sigma_j)\Omega(z_{\nu}\sigma_j)\Omega(z_{\mu}\sigma_i)
\\ \nonumber
&-&
\Omega(z_{\mu}\sigma_i)\Omega(z_{\mu}\sigma_i)\Omega(z_{\nu}\sigma_j)
\Omega (z_{\nu}\sigma_j) +
\Omega(z_{\mu}\sigma_i)\Omega(z_{\nu}\sigma_j)\Omega(z_{\nu}\sigma_j)\Omega(z_{\mu}\sigma_i) \Big ]
\Big \}.
\end{eqnarray}
Finally, by re-introducing the overlaps and making all the cancellations that happen to appear, we can write
\be
\partial_{\beta} \langle q_{12}p_{12}\rangle = \alpha N \Big( \langle q_{12}^2p_{12}^2 \rangle -4  \langle q_{12}p_{12}q_{23}p_{23} \rangle + 3 \langle q_{12}p_{12}q_{34}p_{34} \rangle \Big)
\ee and, by a stability requirement (i.e., this streaming is bounded), again in the thermodynamic limit  the thesis is proved.
\end{proof}

\section{Conclusions}
In this work we addressed RBMs from a mathematical physics perspective.
\newline
First, we provided a basic background on these systems trying to explain, in a non-rigorous way, the mechanisms through which they are able to perform statistical learning and information retrieval; these capabilities make RBMs key tools in the rapidly expanding field of Artificial Intelligence.
In particular, we showed how to obtain the contrastive divergence algorithm for learning of continuous weights and how, ultimately, this recipe displays the same mathematical representation of the Hebb rule, a well consolidated prescription for learning with discrete weights \cite{Hopfield}. Further, {by a robust Bayesian argument, we showed a formal equivalence between RBMs and Hopfield networks, where} the weights learnt by the RBM are exactly the patterns successively retrieved by the (dual) Hopfield network \cite{Barra2}. {Although learning and retrieval are typically addressed separately (thanks to adiabatic-like hypothesis), these results restore the intrinsic interplay between {\em learning} and {\em retrieval} which, in fact, are two aspects of the single and broader act of {\em cognition}.}

Then, adopting a mathematical rigorous treatment, we proved several results concerning the two main representations of RBMs
(analogical versus digital weights). From the statistical mechanical perspective, these systems are two-party spin-glasses whose external party (or {\em layer} to keep the original jargon) is always built up by binary Ising spins/neurons, that are fed by the external information; the inner layer, instead, detects features contained in the data presented as input by suitable learning algorithms which tune the weights connecting the two layers  (e.g., contrastive divergence and simulated annealing).
\newline
Once the learning stage is over, the system relaxes to a Gibbs-measure generated by the Hamiltonian of the RBM, whose control is entirely
deducible by the knowledge of its free energy: the latter thus plays as the principal observable whose {investigation is in order}. However, while
these systems are becoming pervasive in the applied world, at the rigorous level, solely the Sherrington-Kirkpatrick is a fully understood spin-glass (the single-party \cite{MPV,Panchenko}) and any variation on theme (even possibly mild) is still a great challenge and much care should be paid in their mathematical treatment.

Here we showed that the free energy of these machines is a self-averaging quantity in the thermodynamic limit and that it self-averages over its quenched expectation. This was proven by adapting the Pastur-Shcherbina-Tirozzi argument \cite{Tirozzi1,Tirozzi2} (based on Doob Theorem for martingales), originally developed for standard Hopfield networks \cite{Tirozzi1,Tirozzi2}. Then, en route to the quenched free energy, we gave an expression of its annealed approximation that, due to Jensen inequality, plays as a natural bound for the quenched one.\\
Next, turning to machines with Gaussian weights, by an adaptation of Guerra's interpolation scheme \cite{Mingione}, we provided the explicit expression of the free energy of these machines in the thermodynamic limit, under the assumption of replica symmetry.  Remarkably, such a free energy turns out to share the same mathematical structure of the Hopfield one. Furthermore, we gave an argument based on the universality of the quenched noise (consistent with \cite{Genovese}) to generalize such a result to the previous machine with Boolean weights. The ergodicity line, that separates the region where the annealed approximation holds from the region where the quenched one prevails, turns out to be the same of the AGS theory for the Hopfield model \cite{Amit,Tirozzi2} thus conferring a bulk of robustness to the theory as a whole. Finally, we gave a glance at overlaps fluctuations, i.e., beyond the replica symmetry, and we found Aizenman-Contucci identities.

\section*{Acknowledgments}
\noindent
E.A. acknowledges financial support by Sapienza Universit\`a di Roma (project no. RG11715C7CC31E3D).\\
A.B. acknowledges financial support by Salento University, by INFN, by MIUR (via the basic research grant 2018) and by ``Match/Pytagoras" project (FESR/FSE 2014/2020).\\
E.A. and A.B. acknowledge financial support by GNFM-INdAM (``Progetto Giovani 2018'').
\newline
The authors are grateful to Francesco Guerra and Emanuele Mingione for usueful conversations.

\section*{Appendix A. Proof of Theorem 1} \label{sec:A}
\begin{proof}
First, let us explain the idea behind this proof.
We consider the set of all possible $2^{NP}$ weight combinations $\mathcal{R}_N = \{\xi^\mu_i  \}_{i =1,...,N}^{\mu=1,...,P}$ and we build a sequence of subsets
\begin{equation} \label{eq:subset}
\mathcal{R}_k^N=\{\xi^\mu_i  \}_{i \ge k}^{\mu=1,...,P},
\end{equation}
with $k=1,...,N$. 
Next, we define the conditional expectations $\mathbb{E}_{<k}$ over these subsets and we show that the conditional expectation $F_N^k$ of the extensive free energy over two consecutive subsets can be used to estimate the fluctuations in the intensive free energy. Finally, exploiting the martingale nature of the
sequence $(F_N^k, \mathcal{R}_k^N)$, we can bound fluctuations and show that they are vanishing in the thermodynamic limit.
\newline
This method has been widely exploited in the past (see e.g., \cite{Tirozzi2,Tirozzi3,Tirozzi4}) as it can be applied to different examples of neural networks,
provided that patterns are independent and the probability of the conditioning subsets are non null (we refer to \cite{TirozziBook} for a more extensive
discussion on this method).

Let us now describe the proof in more details. The sequence of subsets $\mathcal{R}_k^N$ with $k=1,...,N$, defined in (\ref{eq:subset}), constitutes a decreasing family of $\sigma$-algebras, over which we define
the conditional expectation
\begin{equation}\label{eq:conditional}
\mathbb{E}_{<k} \dot=  { \frac{1}{2^{P(k-1)}}} \sum_{\{\xi^\mu_i = \pm 1 \}_ {i<k}^{\mu=1,..,P}}.
\end{equation}
In particular, the conditional expectation over $\mathcal{R}_k^N$ of the extensive free energy $F_N$ is referred to as $F^k_N$ and reads as
\begin{equation} \label{eq:fk}
F^k_N=\mathbb{E}(F_N | \mathcal{R}_k^N)= {\frac{1}{2^{P(k-1)}}} \sum_{(\xi^\mu_i= \pm 1)_ {i<k}^{\mu=1,..,P}} \left[ \log Z_{P,N}(\beta|\xi) \right].
\end{equation}
Notice that in the left hand side the conditioning is with respect to any event of the $\sigma$-algebra $\mathcal{R}_k^N$ (namely with respect to any choice of
the random variables $\xi_i^{\mu}$, with $i \geq k$) and in the right hand side
the average is performed over the non-conditioned events (namely over all possible weigths $\xi_i^{\mu}$, with $i < k$).
Otherwise stated, the expectation $\mathbb{E}( \cdot | \mathcal{R}_k^N)$ provides the average over a subset of the quenched variables, corresponding to $i<k$,
while any particular realization of weights in $\mathcal{R}_k^N$ can be taken as a ``boundary condition''.
Remarkably, the sequence constituted by the stochastic variables $F_N^k$ and by the $\sigma$-algebra $\mathcal{R}_k^N$ fulfill the martingale
property \cite{Tirozzi1,Tirozzi2}
\be \label{eq:martin}
\mathbb{E}(F^k_N|\mathcal{R}_l^N)=\begin{cases}F^k_N &{if}\quad l<k \\ F^l_N &{if}\quad l \ge k \end{cases}.
\ee

Now, by setting $k=1$ and $k=N+1$ in Eq.~\ref{eq:fk}, we get
\begin{equation}
F^1_N=  \log Z_{P,N}(\beta|\xi), \quad F^{N+1}_N=  \mathbb{E}\left(\log Z_{P,N}(\beta|\xi)\right),
\end{equation}
namely, we recover the extensive free-energy and quenched extensive free-energy, respectively.
Let also introduce the relative increment $\Psi_k$ as
\be \label{eq:psi}
\Psi_k=F^k_N-F^{k+1}_N,
\ee
in such a way that we can write the linear term inside eq. (\ref{theorem1}) as an arithmetic average of the incremental free-energy $\Psi_k$ over the $N$ visible
neurons, that is
\be \label{eq:flutt1}
A_{P,N} - \mathbb{E}\left(A_{P,N}\right)= \frac{1}{N} \sum_{k=1}^N \Psi_k.
\ee
The free-energy fluctuations therefore reads as
\begin{equation} \label{eq:flutt2}
\mathbb{E}\left( A_{P,N}- \mathbb{E}\left(A_{P,N}\right)\right)^2=  \frac{1}{N^2} \sum_{k=1}^N \mathbb{E} \Psi_k^2 +\frac{2}{N^2} \sum_{k=1}^N \sum_{l=k+1}^N \mathbb{E} \Psi_k \Psi_l,
\end{equation}
where in the right hand side we split the diagonal and the off-diagonal contributions.
Actually, only the former matters since the latter is null as shown in the following:
\begin{equation} \label{eq:flutt3}
\begin{array}{lcl} \mathbb{E}(\Psi_k \Psi_l) &=& \mathbb{E}(\mathbb{E}(\Psi_k \Psi_l | \mathcal{R}^N_l))\\
&=& \mathbb{E}( \Psi_l \mathbb{E}(\Psi_k| \mathcal{R}^N_l))\\
&=& \mathbb{E}( \Psi_l \mathbb{E}( F^k_N -F^{k+1}_N| \mathcal{R}^N_l))\\
&=& \mathbb{E} ( \Psi_l \mathbb{E}( F^l_N -F^l_N))\\
&=& 0. \end{array}
\end{equation}
More precisely, in the first passage we exploit the fact that $\mathbb{E}(\mathbb{E}(\cdot |  \mathcal{R}_l^N)) = \mathbb{E}(\cdot)$, in the second passage we
exploit the fact that the conditioning is effective only on $\Psi_k$ (recalling that $k<l$), in the third passage we use the definition (\ref{eq:psi}), and in
the fourth passage we use Eq.~\ref{eq:martin} (recalling that $k+1 \leq l$).
\newline
Thus, in order to get the convergence in probability of the infinite volume limit of the free energy of the RBM (\ref{BMprima}) we only need to show that
\begin{equation}
\mathbb{E} \Psi_k^2 \le C,
\end{equation}
for some constant $C$. To this aim, let us define the following interpolating functions
\begin{equation}
\begin{array}{lcl}
H_k&=&-\frac{1}{\sqrt{N}} \sum_{i=1, i \not= k}^N \sum_{\mu=1}^P \xi^\mu_i \sigma_i z_\mu,\\
R_k&=&-\frac{1}{\sqrt{N}} \sum_{\mu=1}^P \xi^\mu_k \sigma_k z_\mu,\\
\Phi_k(t)&=& H_k+ t R_k, \\
g_k(t)&=& \log Z_{P,N}(\Phi_k(t))-\log Z_{P,N}(\Phi_k(0)).\\
\end{array}
\end{equation}
Notice that $H_k$ describes the original system but devoid of the $k$-th neuron (i.e., a ``cavity'' \cite{MPV}), $R_k$ represents the interaction term related to the $k$-th neuron, in such a way
that the interpolating function $\Phi_k(t)$ returns the original Hamiltonian when $t=1$, and $H_k$ when $t=0$. Finally, the function $g_k(t)$ vanishes when $t=0$,
while when $t=1$ it provides the contribution to the (extensive) free energy given by the $k$-th spin.
By applying the conditional average (\ref{eq:conditional}) to $g_k(1)$ and comparing with the definition (\ref{eq:psi}) we have the following identity
\begin{equation}
\Psi_k = \mathbb{E}_{<k} \left( g_k(1) \right) -\mathbb{E}_{<k+1} \left( g_k(1) \right).
\end{equation}
Next, by applying the Cauchy-Schwarz inequality and by averaging we get
\be \label{eq:passo1}
\mathbb{E} \Psi_k^2 \le 2 \mathbb{E}(g_k(1))^2.
\ee
Further, from the convexity of the partition function we have also
\begin{eqnarray} \label{eq:conv}
\frac{d^2 }{d t^2} g_k(t) \geq 0.
\end{eqnarray}
Exploiting the last expression, along with the fact that $g_k(0)=0$, we can get
\begin{equation}\label{eq:passo2}
g'_k(0) \leq g_k(1) \leq g'_k(1).
\end{equation}
The left inequality in (\ref{eq:passo2}) is obtained by a Taylor expansion of $g_k(1)$ around $t=0$, that is
$$g_k(1)=g_k(0) +g'_k(0) +g''_k(\xi)$$
where $0\le \xi \le 1$, from which it follows that $g_k(1)\ge g'_k(0)$ since
$g_k(0)=0$ and $g''_k(\xi) \geq 0$.
Analogously, for the right inequality in (\ref{eq:passo2}) we Taylor expand $g_k(0)$ around $t=1$, that is
$$g_k(0)=g_k(1)+(0-1)g_k'(1)+g_k''(\xi),$$
where $0\le\xi\le 1$, thus
$$g_k(1)=g_k'(1)-g''_k(\xi)$$
from which it follows that $g_k(1) \le g_k'(1)$ since $g_k''(\xi) \geq 0$.
\newline
In the following we will use the right inequality of (\ref{eq:passo2}) and, as for $g'_k(1)$, we notice that it is just the thermal average of the term $R_k$ and, exploiting the fact that the $\xi$ are uncorrelated and that $P/N$ is finite in the thermodynamic limit, one can state that
\begin{equation} \label{eq:bound}
\frac{\mathbb{E} \Psi_k^2}{2}  \le  \mathbb{E}\left(g_k(1)^2 \right) \le  \mathbb{E}\left(g_k'(1)^2\right) \le C.
\end{equation}
Now, recalling eqs.~(\ref{eq:flutt2})-(\ref{eq:flutt3}) and applying the bound in eq.~(\ref{eq:bound}), we finally get
\be
\mathbb{E}\left( A_{P,N}- \mathbb{E}\left(A_{P,N}\right)\right)^2 = \frac{1}{N^2} \sum_{i=1}^N
\mathbb{E} \Psi_k^2  \leq \frac{1}{N^2} \sum_{i=1}^N C = \frac{NC}{N^2} \rightarrow 0.
\ee
\end{proof}
\newline
\newline
It is worth mentioning that there are other techniques which may be used to prove Theorem 1, such as Talagrand's techniques on concentration of measure \cite{Tala4}; a clear outline on this perspective can be found in \cite{Bovier5}.

\section*{Appendix B. Proof of Proposition 2, Theorem 3 and Corollary 1}

\begin{proof}
The proof works in five stages. First ($i$) we evaluate explicitly the Cauchy condition $\phi_{P,N}(\beta, n,t=0) = \frac{1}{Nn} \ln \mathbb{E}\left( Z_0^n \right)$. Next ($ii$) we calculate $d \phi_{P,N}(\beta,n,t)/dt$, which we integrate back for $t\in[0,1]$; in this calculation ($iii$) we will explicitly make use of the assumption of replica symmetry. Then ($iv$), we construct the $n$-quenched replica symmetric free energy and we extremize the latter over the order parameters to get the related self-consistent equations. Finally ($v$), such equations are expanded to get the transition line.
\newline
\newline
{\em i. Cauchy condition.}
\begin{eqnarray}
\label{eq:Cauchy}
\phi_{P,N}(\beta, n,t=0)  &=& \frac{1}{Nn} \ln \mathbb{E} Z_0^n \\ \nonumber
   &=& \frac{1}{Nn} \ln \mathbb{E} \prod_{\gamma=1}^{n}\left[\sum_{\sigma_i^{\gamma}}\int_{-\infty}^{+\infty}\prod_{\mu=1}^{P} d\mu(z_{\mu}^{\gamma})
   e^{\sqrt{\alpha \beta \bar{p}} \sum_i^N J_i \sigma_i^{\gamma}+ \sqrt{\beta \bar{q}}  \sum_{\mu}^{P}J_{\mu}z_{\mu}^{\gamma} + \frac{\beta(1-\bar{q})}{2}\sum_{\mu}^{P}(z_{\mu}^\gamma)^2} \right]\\ \nonumber
   &=& \frac{1}{Nn} \ln \left\{ \mathbb{E}_{J_i}\sum_{\sigma_i^{\gamma}} e^{\sqrt{\alpha \beta \bar{p}} \sum_i^N J_i \sigma_i^{\gamma}}\cdot \mathbb{E}_{J_{\mu}}\int_{-\infty}^{+\infty}\prod_{\mu=1}^{P} d\mu(z_{\mu}^{\gamma}) e^{\sqrt{\beta \bar{q}} \sum_{\mu}^{P}J_{\mu}z_{\mu}^{\gamma} + \frac{\beta(1-\bar{q})}{2}\sum_{\mu}^{P}(z_{\mu}^\gamma)^2} \right\}\\
   &=& P_1 + P_2,
\end{eqnarray}
where
\begin{eqnarray}
P_1  &=& \ln 2 +   \frac{1}{n}\ln \int_{-\infty}^{+\infty}d\mu(x)\cosh^n \left(x \sqrt{\alpha \beta \bar{p}  } \right)   \\
P_2  &=& \frac{1}{Nn}\ln \mathbb{E}_{J_{\mu}}\left \{ \frac{1}{\left[ 1 - \beta (1-\bar{q}) \right]^{\frac{n P}{2}}}e^{\frac{\bar{q}\beta n P}{2\left [1-\beta(1-\bar{q}) \right]}J_{\mu}^2}  \right \} =  -\frac{\alpha}{2}\ln \left [1-\beta(1-\bar{q}) \right ] + \frac{\alpha \beta}{2} \frac{\bar{q}}{\left[1-\beta(1-\bar{q}) \right]}.
\end{eqnarray}
{\em ii. $t$-streaming of $\phi_{P,N}(\beta,n,t)$.}
\newline
The derivative of the $n$-quenched free energy reads as
\be
\dot{\phi}_{P,N}(\beta, n,t) \equiv \frac{d \phi_{P,N}(\beta, n,t)}{dt} = \frac{d}{dt} \frac{1}{Nn}\ln \mathbb{E}\left(Z^n_t \right) = \frac{1}{N}\frac{1}{\mathbb{E}\left(Z_t^n\right)}\mathbb{E}\left( Z_t^n \frac{\dot{Z}_t}{Z_t} \right)
\ee
and, in particular,
\begin{eqnarray}\nonumber
\frac{\dot{Z}_t}{Z_t} &\equiv& \frac{1}{Z_t}\frac{dZ_t}{dt}\\
\nonumber
&=& \frac{1}{Z_t}  \frac{d }{dt} \sum_{\sigma}\int_{-\infty}^{+\infty}\prod_{\mu}^{P}d \mu(z_{\mu}) e^{\sqrt{t}\left( \frac{\sqrt{\beta}}{\sqrt{N}}\sum_{i,\mu}^{N,P}\xi_i^{\mu}\sigma_i z_{\mu} \right) + \sqrt{1-t}\left( \sqrt{\alpha \beta \bar{p}} \sum_{i}^{N} J_i \sigma_i + \sqrt{\beta \bar{q}} \sum_{\mu}^{P} J_{\mu}z_{\mu}\right) + (1-t)\beta (1-\bar{q}) \sum_{\mu}^{P} \frac{z_{\mu}^2}{2}}  \\
&=& \tilde{\mathcal{A}}+  \tilde{\mathcal{B}} + \tilde{\mathcal{C}} + \tilde{\mathcal{D}},
\end{eqnarray}
where
\begin{eqnarray}
\tilde{\mathcal{A}} &=& \frac{\sqrt{\beta}}{2\sqrt{Nt}}\sum_{i,\mu}^{N,P}\xi_i^{\mu}\omega(\sigma_i z_{\mu}),\\
\tilde{\mathcal{B}} &=&  -\frac{\sqrt{\alpha \beta \bar{p}}}{2 \sqrt{1-t}}\sum_{i}^{N}J_i \omega(\sigma_i),\\
\tilde{\mathcal{C}} &=& -\frac{\sqrt{\beta \bar{q}}}{2 \sqrt{1-t}}\sum_{\mu}^{P}J_{\mu}\omega(z_{\mu}),\\
\tilde{\mathcal{D}} &=& - \frac{\beta}{2} (1-\bar{q})\sum_{\mu}^{P}z_{\mu}^2.
\end{eqnarray}
Thus, posing
\begin{eqnarray}
\mathcal{A} &=& \frac{1}{N}\frac{1}{\mathbb{E}\left(Z_t^n\right)}\mathbb{E}\left(Z^n_t \tilde{\mathcal{A}} \right),\\
\mathcal{B} &=& \frac{1}{N}\frac{1}{\mathbb{E}\left(Z_t^n\right)}\mathbb{E}\left(Z^n_t \tilde{\mathcal{B}} \right),\\
\mathcal{C} &=& \frac{1}{N}\frac{1}{\mathbb{E}\left(Z_t^n\right)}\mathbb{E}\left(Z^n_t  \tilde{\mathcal{C}} \right),\\
\mathcal{D} &=& \frac{1}{N}\frac{1}{\mathbb{E}\left(Z_t^n\right)}\mathbb{E}\left(Z^n_t \tilde{\mathcal{D}} \right).
\end{eqnarray}
we can recast $\dot{\phi}_{P,N}(\beta,n,t)$ as
\begin{equation} \label{eq:alltogether}
\dot{\phi}_{P,N}(\beta,n,t) = \mathcal{A} + \mathcal{B} + \mathcal{C} + \mathcal{D}.
\end{equation}
Let us start with the evaluation of $\mathcal{A}$:
\begin{eqnarray}
\mathcal{A} &=& \frac{1}{N}\frac{1}{\mathbb{E}\left( Z_t^n \right)}\mathbb{E} \left( Z^n_t \frac{\sqrt{\beta}}{2\sqrt{Nt}}\sum_{i,\mu}^{N,P} \xi_i^{\mu} \omega(\sigma_i z_{\mu})\right) \\
\label{eq:a1}
&=& \frac{1}{N}\frac{1}{\mathbb{E}\left( Z_t^n \right)} \frac{\sqrt{\beta}}{2\sqrt{tN}} \sum_{i,\mu}^{N,P} \mathbb{E} \xi_i^{\mu} Z_t^n \omega(\sigma_i z_{\mu})\\
&=& \frac{1}{N}\frac{1}{\mathbb{E}\left( Z_t^n \right)} \frac{\sqrt{\beta}}{2\sqrt{tN}} \sum_{i,\mu}^{N,P} \mathbb{E}
\label{eq:a2}
\partial_{\xi_i^{\mu}} \left( Z_t^n \omega(\sigma_i z_{\mu}) \right)\\
&=& \frac{1}{N}\frac{1}{\mathbb{E}\left( Z_t^n \right)} \frac{\sqrt{\beta}}{2\sqrt{tN}} \sum_{i,\mu}^{N,P} \mathbb{E} \left( n Z_t^{(n-1)} \frac{d Z_t}{d\xi_i^{\mu}} \omega(\sigma_i z_{\mu}) + Z_t^n \partial_{\xi_i^{\mu}} \omega(\sigma_i z_{\mu}) \right)\\
&=& \mathcal{A}_1 + \mathcal{A}_2,
\end{eqnarray}
where
\begin{eqnarray}
\mathcal{A}_1 &=& n \cdot \frac{P \beta}{2N} \langle q_{12} p_{12}\rangle_n,\\
\mathcal{A}_2 &=& \frac{\beta}{2N}\sum_{\mu}^{P}\langle z_{\mu}^2 \rangle_n - \frac{P \beta}{2N}\langle q_{12}p_{12}\rangle_n.
\end{eqnarray}
Notice that to get from (\ref{eq:a1}) to (\ref{eq:a2}) we applied Wick's theorem, namely $\mathbb{E}[x_i F(x)] = \sum_k A_{ik}^{-1} \mathbb{E}[\partial F(x) / \partial x_k]$, where $A_{ik}^{-1}= \mathbb{E}(x_i x_k)$, and, since our variables are i.i.d. from $\mathcal{N}(0,1)$, we have $A_{ik}^{-1}=\delta_{ik}$.\\
Performing analogous calculations overall we obtain
\begin{eqnarray}
\mathcal{A} &=& (n-1) \frac{\alpha \beta}{2}\langle q_{12}p_{12}\rangle_n + \frac{\alpha \beta}{2}\frac{1}{P}\sum_{\mu}^{P}\langle z_{\mu}^2\rangle_n ,\\
\mathcal{B} &=& -(n-1)\frac{\alpha \beta}{2}\bar{p}\langle q_{12}\rangle_n - \frac{\alpha \beta}{2}\bar{p},\\
\mathcal{C} &=& -(n-1)\frac{\alpha \beta}{2}\bar{q}\langle p_{12}\rangle_n  - \frac{\alpha \beta}{2}\bar{q}\frac{1}{P}\sum_{\mu}^{P}\langle z_{\mu}^2\rangle_n,\\
\mathcal{D} &=& - \frac{\alpha \beta}{2}(1-\bar{q})\frac{1}{P}\sum_{\mu}^{P}\langle z_{\mu}^2\rangle_n.
\end{eqnarray}
Putting all the terms together according to (\ref{eq:alltogether}) and noticing that the nasty terms containing the factor $\frac{1}{P}\sum_{\mu}^{P}\langle z_{\mu}^2\rangle_n$ cancel out, we have
\be
\dot{\phi}_{P,N}(\beta, n,t)= (n-1)\frac{\alpha \beta}{2}\langle q_{12}p_{12}\rangle_n-(n-1)\frac{\alpha \beta}{2}\bar{p}\langle q_{12}\rangle_n - \frac{\alpha \beta}{2}\bar{p}-(n-1)\frac{\alpha \beta}{2}\bar{q}\langle p_{12}\rangle_n.
\ee
{\em iii. Integration.}\\
Recalling that, in the replica symmetric regime and in the thermodynamic limit,
$$
0 = \langle \left(q_{12}-\bar{q}\right)\left(p_{12}-\bar{p}\right)\rangle_n,
$$
we can use the above relation to verify that
\be
\lim_{N \to \infty} \dot{\phi}_{P,N}(\beta, n,t) = -\frac{\alpha \beta}{2} \bar{p} \left[1 + (n-1)\bar{q} \right].
\ee
Its integration back in $t$ simply coincides with the multiplication by one:
\be\label{sumruleintegrata}
\lim_{N \to \infty} \int_{0}^{1} dt \dot{\phi}_{P,N}(\beta,n,t) = -\frac{\alpha \beta}{2} \bar{p} \left[1 + (n-1)\bar{q} \right]\cdot 1,
\ee
and this closes the calculations of the various contributions to $\dot{\phi}_{P,N}(\beta,n)$, in the asymptotic limit.
\newline
\newline
{\em iv. Extremization.}
\newline
Recalling eqs.~(\ref{eq:defphi}), (\ref{Z-interpolante}) and (\ref{eq:phibetant}),  $\phi_{P,N}(\beta,n, t=1) = (Nn)^{-1} \log \mathbb{E}(Z^n) = \phi_{P,N}(\beta,n)$ and, by a trivial application of the Fundamental Theorem of Calculus, $\phi_{P,N}(\beta,n, t=1)  = \phi_{P,N}(\beta,n, t=0) + \int_0^1 \dot{\phi}_{P,N}(\beta,n,t) dt$.
In particular, in the thermodynamic limit, summing the Cauchy condition (\ref{eq:Cauchy}) to the r.h.s. of eq. (\ref{sumruleintegrata}) we obtain $\lim_{N \to \infty}\phi_{P,N}(\beta, n,t=1)=\lim_{N \to \infty}\phi_{P,N}(\beta,n)$ as
\begin{eqnarray}
\nonumber
\lim_{N \to \infty}\phi_{P,N}(\beta,n) &=& \ln 2 +   \frac{1}{n}\ln \int_{-\infty}^{+\infty}d\mu(x)\cosh^n \left(x  \sqrt{\alpha \beta \bar{p} } \right)
 -\frac{\alpha}{2}\ln \left[ 1-\beta(1-\bar{q}) \right] \\
\nonumber
 &+& \frac{\alpha \beta}{2} \frac{\bar{q}}{1-\beta(1-\bar{q}) } -\frac{\alpha \beta}{2} \bar{p} \left[1 + (n-1)\bar{q} \right].
\end{eqnarray}
whose extremization with respect to $\bar{q},\ \bar{p}$ returns
\begin{eqnarray}
\label{app:sc1}
\frac{\partial \phi_{P,N}(\beta,n)}{\partial \bar{q}} &=& 0 \Rightarrow \bar{p} = \frac{ \beta \bar{q}}{(1-n)\left[1-\beta \left(1-\bar{q}\right)\right]^2}, \\
\label{app:sc2}
\frac{\partial \phi_{P,N}(\beta,n)}{\partial \bar{p}} &=& 0 \Rightarrow \bar{q} = \int d \mu(x) \tanh^2\left(x \sqrt{\alpha \beta  \bar{p}}  \right).
\end{eqnarray}
The solution of these self-consistent equations provide the expectation of $q$ and $p$ in the thermodynamic limit and under the replica symmetry.
\end{proof}
\newline
\newline
{\em v. Transition line.}
We expand (\ref{app:sc1}) and (\ref{app:sc2}) around $\bar{p} = \bar{q} =0$ and, combining the two equations we get
\begin{equation}
\frac{\alpha \beta^2}{(1 - \beta)^2 (1-n)}=1 \Rightarrow (T-1)^2 = \frac{\alpha}{1-n},
\end{equation}
and, with some algebra one recovers (\ref{eq:criticalline}).


\begin{thebibliography}{}

\bibitem{BM1} D.H. Ackley, G.E. Hinton, T.J. Sejnowski, {\em A learning algorithm for Boltzmann machines}, Cognitive Sci. \textbf{9}.1:147-169, (1985).

\bibitem{Agliari1}  E. Agliari, et al., {\em Multitasking associative networks},
Phys. Rev. Lett. \textbf{109}, 268101, (2012).

\bibitem{Agliari4} E. Agliari,  et al., {\em Anergy  in self-directed B-cells from a statistical mechanics perspective},
J. Theor. Biol. \textbf{375},  21–31, (2015).

\bibitem{Agliari5} E. Agliari,  et al., {\em A thermodynamical perspective of immune capabilities}, J. Theor. Biol. \textbf{267}, 48, (2011).

\bibitem{Dani} E. Agliari, D. Migliozzi, D. Tantari, {\em Multipartite Hopfield models}, J. Stat. Phys. (2018).

\bibitem{Aizenman} M. Aizenman, P. Contucci, {\em On the stability of the quenched state in mean-field spin-glass models},  J. Stat. Phys. \textbf{92}:765-783, (1998).


\bibitem{Amit} D.J. Amit, {\em Modeling brain functions}, Cambridge Univ. Press (1989).

\bibitem{PP}  P. Baldi, P. Sadowski, D. Whiteson, {\em Searching for exotic particles in high-energy physics with deep learning}, Nature Commun. \textbf{5}, 12, (2014).

\bibitem{Barra0} A. Barra, G. Genovese, P. Sollich, D. Tantari, {\em Phase transitions in Restricted Boltzmann Machines with generic priors}, Phys. Rev. E \textbf{96},042156 (2017).

\bibitem{Barra1} A. Barra et al., {\em Multi-species mean field spin glasses. Rigorous results}, Annales Henri Poincar\'e \textbf{16}(3), 691-708, (2015).

\bibitem{Barra2} A. Barra et al., {\em On the equivalence among Hopfield neural networks and restricted Boltzmann machines},
Neural Networks \textbf{34}, 1-9, (2012).

\bibitem{Mingione} A. Barra, F. Guerra, E. Mingione, {\em Interpolating the Sherrington-Kirkpatrick replica trick}, Philosophical Magazine \textbf{92}(1-3):78-97, (2012).

\bibitem{Barra3} A. Barra, F. Guerra, G. Genovese, D. Tantari, {\em How glassy are neural networks?}, JSTAT P07009, (2012).

\bibitem{Barra4} A. Barra, G. Genovese, F. Guerra, {\em Equilibrium statistical mechanics  of bipartite spin systems},
J. Phys. A \textbf{44}, 245002, (2011).

\bibitem{Barra5}  A. Barra, G. Genovese, F. Guerra, {\em  The replica symmetric approximation of the analogical neural network},
J. Stat. Phys. \textbf{140}(4), 784-796, (2010).

\bibitem{Barra6} A. Barra, G. Genovese, P. Sollich, D. Tantari, {\em Phase diagram of restricted Boltzmann machines and generalized Hopfield networks with arbitrary priors}, Phys. Rev. E \textbf{97}, 022310 (2018).


\bibitem{Annibale1} S. Bartolucci, A. Annibale, {\em A dynamical model of the adaptive immune system: effects of cells promiscuity, antigens and B–B interactions}, JSTAT P08017, (2015).

\bibitem{Annibale2} S. Bartolucci, A. Mozeika, A. Annibale, {\em The role of idiotypic interactions in the adaptive immune system: a belief-propagation approach}, JSTAT 083402, (2016).

\bibitem{Bovier2} A. Bovier, V. Gayrard, P. Picco, {\em Gibbs states of the Hopfield model in the regime of perfect memory}, Prob. Theor. Rel. Fields \textbf{100}.3:329-363, (1994).


\bibitem{Bovier4} A. Bovier, V. Gayrard, P. Picco, {\em Gibbs states of the Hopfield model with extensively many patterns}, J. Stat. Phys. \textbf{79}.1: 395-414, (1995).

\bibitem{Bovier5} A. Bovier, {\em Statistical Mechanics of Disordered Systems. A Mathematical Perspective}, Cambridge Series in Statistical and Probabilistic Mathematics, Cambridge (2006).


\bibitem{martingale} B. Brown, {\em Martingale central limit theorems},  Annal Math. Stat. \textbf{42}:59-66, (1971).


\bibitem{Benarous} A.M. Choromanska, M. Henaff, M.  Mathieu, G. Ben Arous, Y.  Le Cun, {\em The loss surfaces of multilayer networks}, Artificial Intelligence and Statistics, (2015).


\bibitem{Coolen} A.C.C. Coolen, R. Kuhn, P. Sollich, {\em Theory of neural information processing systems}, Oxford Press (2005).

\bibitem{Florent2} A. Decelle, F. Krzakala, C. Moore, L. Zdeborova, {\em Inference and phase transitions in the detection of modules in sparse networks}, Phys. Rev. Lett. \textbf{107}(6):065701, (2011).

\bibitem{Decelle1} A. Decelle, G. Fissore, C. Furtlehner, {\em Thermodynamics of Restricted Boltzmann Machines and Related Learning Dynamics}, submitted

\bibitem{Florent3} A. Decelle, G. Fissore, C. Furtlehner, {\em Spectral Learning of Restricted Boltzmann Machines}, Europhys. Lett. \textbf{119}(6): 60001, (2017).


\bibitem{Vieri} H. Englisch, V. Mastropietro, B. Tirozzi, {\em The B.A.M. storage capacity}, J. Phys. I France \textbf{5}, 85-96, (1995).

\bibitem{Genovese}  G. Genovese, {\em Universality in Bipartite Mean Field Spin Glasses},  J. Math. Phys.  \textbf{53}, 123304, (2012).


\bibitem{GG} S. Ghirlanda, F. Guerra, {\em General properties of overlap probability distributions in disordered spin systems. Towards Parisi ultrametricity}, J. Phys. A \textbf{31}, 9149-9155, (1998).

\bibitem{DLbook} I. Goodfellow, Y. Bengio, A. Courville, {\em Deep Learning}, M.I.T. press (2017).

\bibitem {Guerra0} F. Guerra, {\em Broken replica symmetry bounds in the mean field spin glass model}, Comm. Math. Phy. \textbf{233}:1-12, (2003).

\bibitem{Guerra1} F. Guerra, F.L. Toninelli, {\em The thermodynamic limit in mean field spin glass models}, Comm. Math. Phys. \textbf{230}.1:71-79 (2002).

\bibitem{Guerra2} F. Guerra, F.L. Toninelli, {\em The infinite volume limit in generalized mean field disordered models}, arXiv preprint cond-mat/0208579, (2002).


\bibitem{Hopfield} J.J. Hopfield, {\em Neural networks and physical systems with emergent collective computational abilities},  Proceedings of the national academy of sciences 79.8 (1982): 2554-2558.

\bibitem{tank} J.J. Hopfield, D.W. Tank. {\em Neural computation of decisions in optimization problems}, Biol. Cybern. \textbf{52}(3):141-152, (1985).

\bibitem{Huang1} H. Huang, T. Toyoizumi, {\em Advanced mean-field theory of the restricted Boltzmann machine}, Phys. Rev. E \textbf{91}.5:050101, (2015).

\bibitem{Huang2} H. Huang, {\em Reconstructing the Hopfield network as an inverse Ising problem}, Phys. Rev. E \textbf{81}.3:036104, (2010).

\bibitem{Rodriguez} H.J. Kappen, F. Rodriguez, {\em Efficient learning in Boltzmann machines using linear response theory}, Neural Comput. \textbf{10}.5: 1137-1156, (1998).

\bibitem{kirkpatrick} S. Kirkpatrick, et al., {\em Optimization by simulated annealing}, Science \textbf{220}:671-680, (1983).

\bibitem{BM-ML2} J. Korst, E. Aarts, {\em Combinatorial optimization on a Boltzmann machine}, J. Parall. $\&$ Distrib. Computing \textbf{6}.2:331, (1989).

\bibitem{DL1} Y. Le Cun, Y. Bengio, G. Hinton, {\em Deep learning}, Nature \textbf{521}:436-444, (2015).

\bibitem{Bengio1} N. Le Roux, Y. Bengio, {\em Representational power of restricted Boltzmann machines and deep belief networks}, Neural computation \textbf{20}.6:1631-1649, (2008).


\bibitem{CB} Lobo, D., Levin, M., {\em Inferring regulatory networks from experimental morphological phenotypes: a
computational method reverse-engineers planarian regeneration}, PLoS Comput. Biol. \textbf{11}(6), e1004295, (2015).

\bibitem{Martens} J. Martens, {\em Deep learning via Hessian-free optimization}, Proc. $27^{th}$ Int. Conf. Mach. Learn. (2010).

\bibitem{Mezard} M. Mezard, {\em Mean-field message-passing equations in the Hopfield model and its generalizations}, Phys. Rev. E \textbf{95}, 022117, (2017).

\bibitem{MPV} M. Mezard, G. Parisi, M.A. Virasoro, {\em Spin glass theory and beyond}, World Scientific (1985).

\bibitem{Mozeika1} A. Mozeika, A.C.C. Coolen, {\em Statistical mechanics of clonal expansion in lymphocyte networks modelled with slow and fast variables}, J. Phys. A \textbf{50}.3:035602, (2016).

\bibitem{Mozeika2} A. Mozeika, A.C.C. Coolen, {\em Statistical mechanics of clonal expansion in lymphocyte networks modelled with slow and fast variables}, J. Phys. A \textbf{50}.3: 035602, (2016).

\bibitem{cinese} H. Nishimori, {\em Statistical physics of spin glasses and information processing}, Clarendon Press (2001).

\bibitem{Panchenko} D. Panchenko, {\em The Sherrington-Kirkpatrick model}, Springer, (2013).

\bibitem{Tirozzi1} L. Pastur,  M. Shcherbina,  B. Tirozzi, {\em The replica-symmetric solution without replica trick for the Hopfield model}, J. Stat. Phys. \textbf{74}.5:1161-1183, (1994).

\bibitem{Tirozzi2} L. Pastur,  M. Shcherbina,  B. Tirozzi, {\em On the replica symmetric equations for the Hopfield model}, J. Math. Phys. \textbf{40}.8:3930-3947, (1999).

\bibitem{ruelle} Ruelle, D., {\em Statistical Mechanics: Rigorous results},  W.A. Benjamin, Inc., New York, (1969).

\bibitem{BM-ML1} R. Salakhutdinov, A. Mnih, G. Hinton, {\em Restricted Boltzmann machines for collaborative filtering}, Proc. $24^{th}$ Int. Conf.  Mach. Learn. ACM, (2007).

\bibitem{Hinton1} R. Salakhutdinov, G. Hinton, {\em Deep boltzmann machines}, Artificial Intelligence and Statistics (2009).

\bibitem{Tirozzi3} M. Shcherbina,  B. Tirozzi, {\em The free energy of a class of Hopfield models}, J. Stat. Phys. \textbf{72}.1:113-125, (1993).

\bibitem{Tirozzi4} M. Shcherbina, B. Tirozzi, {\em Rigorous solution of the Gardner problem}, Comm. Math. Phys. \textbf{234}.3:383-422, (2003).

\bibitem{Tala0} M. Talagrand, {\em Spin glasses: a challenge for mathematicians}, Springer, (2003).

\bibitem{Tala1} M. Talagrand, {\em Rigorous results for the Hopfield model with many patterns}, Prob. Theor. Rel. Fields \textbf{110}.2:177-275, (1998).

\bibitem{Tala2} M. Talagrand, {\em Exponential inequalities and convergence of moments in the replica-symmetric regime of the Hopfield model}, Ann. Prob.   1393-1469, (2000).

\bibitem{Tala4} M. Talagrand, {\em Concentration of measure and isoperimetric inequalities in product spaces}, Publications Math\'{e}matiques de l'Institut des Hautes \'{E}tudes Scientifiques, \textbf{81}(1):73205 (1995)

\bibitem{TirozziBook} B. Tirozzi, {\em Modelli matematici di reti neurali}, CEDAM (1995).

\bibitem{Florent1} E.W. Tramel, A. Dremeau, F. Krzakala,  {\em Approximate message passing with restricted Boltzmann machine priors}, JSTAT 073401, (2016).

\bibitem{Monasson} J. Tubiana, R. Monasson, {\em Emergence of Compositional Representations in Restricted Boltzmann Machines}, Phys. Rev. Lett. \textbf{118}.13:138301, (2017).



\end{thebibliography}
\end{document}